\def\a{\alpha}
\def\b{\beta}
\def\t{\theta}
\def\ve{\varepsilon}
\def\lra{\longrightarrow}
\def\ot{\otimes}
\def\lra{\longrightarrow}
\def\rt{\triangleright}
\def\lt{\triangleleft}
\def\Ad{\mathop{\rm Ad}\nolimits}
\def\ad{\mathop{\rm ad}\nolimits}
\def\dt{\left.\frac{d}{dt}\right|_{_{t=0}}}
\newcommand{\G}[1]{\mathfrak{#1}}
\newcommand{\C}[1]{\mathcal{#1}}
\newcommand{\B}[1]{\mathbb{#1}}
\newcommand{\ie}{{\it i.e.\/}\ }
\renewcommand{\leq}{\leqslant}
\renewcommand{\geq}{\geqslant}
\numberwithin{equation}{section}
\newtheorem{theorem}{Theorem}[section]
\newtheorem{proposition}[theorem]{Proposition}
\theoremstyle{definition}
\newtheorem{remark}[theorem]{Remark}
\newtheorem{example}[theorem]{Example}
\title{Matched pairs of discrete dynamical systems}
\author{O\u{g}ul Esen}
\address{Department of Mathematics, Gebze Technical University,  41400 Gebze-Kocaeli, Turkey}
\email{oesen@gtu.edu.tr}
\author{Serkan Sütlü}
\address{Department of Mathematics, I\c{s}ik University, 34980 \c{S}ile-\.{I}stanbul, Turkey}
\email{serkan.sutlu@isikun.edu.tr}
\begin{document}

\maketitle

\begin{abstract}
Matched pairs of Lie groupoids and Lie algebroids are studied. Discrete Euler-Lagrange equations are written for the matched pairs of Lie groupoids. As such, a geometric framework to analyse a discrete system by decomposing it into two mutually interacting subsystems is established. Two examples are provided to illustrate this strategy; the discrete dynamics on the trivial groupoid, and the discrete dynamics on the special linear group. 

\noindent \textbf{Key words:} Discrete dynamics, Lie groupoids, matched pairs. 

\noindent \textbf{MSC2010:} 22A22, 70H99
\end{abstract}

\section{Introduction}

Consider two dynamical/mechanical systems, and their equations of motion presented either in Lagrangian or in Hamiltonian framework. Let also the systems be mutually interacting, so that they cannot preserve their individual motions. As a result, the equation of motion of the coupled system demands more effort than merely putting together the equations of motions of the individual systems. Such systems have been first studied within the semidirect product theory \cite{n2001lagrangian,MaMiOrPeRa07,MarsRatiWein84}, where only one of the systems is allowed to act on the other. Many physical systems fit into this geometry; such as the heavy top \cite{Rati80}, and the Maxwell-Vlasov equations \cite{holm1998euler}. 

The present paper is a part of the project which investigates a purely geometric framework (via a purely algebraic strategy) for decoupling the (discrete) dynamical equations of a system. More precisely, we strive to realize the dynamical equations of a system by means of the dynamical equations of two simpler systems, together with the additional terms that emerge from the mutual interactions of these, \cite{EsSu16,EsenSutl17}. This theory of ``matched pair dynamics'' may also be regarded as a generalization of the semi-direct product theory, which, in particular, corresponds to the case that one of the mutual representations being trivial.

The fundamental geometric object in the matched pair dynamics is that of a Lie group, whereas the algebraic strategy we follow is nothing but the matched pair theory of \cite{Ta81,Maji90-II}. A pair of Lie groups that act on each other, subject to compatibility conditions ensuring a group structure on their cartesian product, is called a ``matched pair of Lie groups''. We shall then call the total space (the cartesian product) ``the matched pair Lie group'' in order to emphasize the ``matching'', while it is also referred as a bicrossedproduct group in \cite{Majid-book,Maji90}, the twilled extension in \cite{KoMa88}, the double Lie group in \cite{LuWe90}, or the Zappa-Sz\'ep product in \cite{Br05}; see also \cite{Sz50,Sz58,szep1962sulle,Za42} . Conversely, from the decomposition point of view, if a Lie group is isomorphic (as topological sets) to the cartesian product of two of its subgroups (with trivial intersection), then it is a matched pair Lie group. In this case, the mutual actions of the subgroups are derived from the group multiplication.

Motivated by the fact that the tangent space of a Lie group has the structure of a Lie group (called the ``tangent group''), and that the tangent group of a matched pair Lie group is a matched pair tangent group, we successfully studied in \cite{EsenSutl17} the equations of motion of systems whose configuration spaces being matched pair tangent groups within the Lagrangian framework. We thus obtained the matched Euler-Lagrange equations, and the matched Euler-Poincar\'{e} equations.

It was expected then, to carry out the similar discussions along the cotangent bundles and the ``cotangent groups'' of Lie groups. This allowed us to apply in \cite{EsSu16} the algebraic machinery (of matched pairs) to the Hamiltonian formalism of the equations of motion. The matched Hamilton's equations, and the matched Lie-Poisson equations were obtained this way. The ``matched pair Hamiltonian dynamics'' found an application even in the field of fluid dynamics, \cite{esen2017hamiltonian}. 

On the other hand, the transition from the Kleinian conception of geometry to the Ehresmannian point of view led to the evolution of Lie groups into Lie groupoids, \cite{Para66,Prad07}, which may be considered as the central objects in ``discrete dynamics'', \cite{MaWe01,MarrDiegMart06}. In this geometry, a discrete system is generated by a Lagrangian on a Lie groupoid, and the dynamical equations are obtained by the directional derivatives of the Lagrangian with left and right invariant vector fields at a finite sequence of ``composable'' elements. 

The theory of discrete dynamics on the Lie groupoid framework has been studied extensively in the literature. We refer the reader to \cite{IgMADiMa08,MaMaSt15} for the discrete dynamics involving constraints, and to \cite{VaCa07} for the field theoretic approach. In particular, there is an even richer theory of discrete dynamics on Lie groups. We cite \cite{BoSu99} for the discrete time Lagrangian mechanics on Lie groups, \cite{MaPeSh99} for the discrete Lie-Poisson and the discrete Euler-Poincar\'e equations, and \cite{MaPeSh2000,JaSaLeMa06} for the reduction of discrete systems under symmetry. For the discrete Hamiltonian dynamics in the realm of variational integrators we refer to \cite{leok2010discrete}, and \cite{colombo2012discrete,colombo2013higher}  for the higher order discrete dynamics. The local description of discrete mechanics can be found in \cite{Marrero2015local}, whereas the inverse problem of the calculus of variations in \cite{Barbero2018inverse}. An implicit formulation of the discrete dynamics has been introduced in \cite{Iglesias2010discrete}.

In much the same way the Lie groups are matched, the Lie groupoids (as well as their Lie algebroids) can be matched. As a result, the very same algebraic strategy of ``matched pairs'' may be used once more; this time to study the discrete dynamical systems. What we achieve then, in the present paper, is to represent the equations of motion of such a system in terms of the equations of motion of two simpler systems, decorated by the additional terms reflecting the mutual interactions of the subsystems. 

The paper is organized in four sections. In the following two sections, Section \ref{sect-Lie-groid-Lie-algoid} and Section \ref{matched-Lie-groids-Lie-algoids}, we recall what we shall use on Lie groupoids, Lie algebroids, and their matched pair theory. It is Section \ref{sect-disc-dynm-matched-pair} where the novelty of the paper lies. The discerete Euler-Lagrange equations are recalled in Subsection \ref{subsect-disc-Euler-Lagrange-eqns}, and then revisited both in Subsection \ref{subsect-disc-dynm-on-matched-pair-grpoid} for matched pairs of Lie groupoids, and in Subsection \ref{subsect-disc-dynm-matched-Lie-grps} for the Lie groups; along the way towards the discrete dynamics on the matched pairs of Lie groups. Section \ref{ex} is reserved for concrete examples. More precisely, in Subsection \ref{ex1} we present the (discrete) Euler-Lagrange equations explicitly for the trivial groupoid, decomposing it into the action groupoid and the coarse groupoid. As such, we realise the (discrete) Euler-Lagrange equations of the trivial groupoid in terms of the (discrete) Euler-Lagrange equations of the action groupoid and the coarse groupoid, glued together via the terms that emerge from the mutual actions of those. Finally, in Subsection \ref{ex2} we illustrate the (discrete) Euler-Lagrange equations on the Lie group $SL(2,\mathbb{C})$, the matched pair decomposition of which incarnates as the Iwasawa decomposition.

\section{Lie groupoids and Lie algebroids}\label{sect-Lie-groid-Lie-algoid}

In order to fix the notation, as well as the convenience of the reader, we devote the present section to a brief summary of the basics of Lie groupoids and Lie algebroids. The reader may consult to \cite{Mackenzie-book,Mackenzie-book-II,MoerMrcu10} for further details.

\subsection{Lie groupoids and their actions}

\subsubsection{Definition and basic examples}~

\noindent
Let $\C{G}$ and $B$ be two manifolds, and let there be two surjective submersions
\[
\xymatrix{
\C{G} \ar@<1ex>[r]^{\a} \ar@<-1ex>[r]_{\b} & B,
}
\]
called the ``source map'' and the ``target map'', respectively. We assume also that, there exists a smooth map 
\[
\ve:B \lra \C{G}, \qquad b \mapsto \widetilde{b},
\]
called the ``object inclusion''. The product space
\[
\C{G} \ast \C{G} :=\big\{ (g,g') \in \C{G} \times \C{G} \mid \b(g) = \a(g')\big\}
\]
is called the ``space of composable elements'', and is equipped with the partial multiplication 
\[
\C{G} \ast \C{G} \lra \C{G}, \qquad (g,g') \mapsto gg'.
\]
The five-tuple $(\C{G},B,\a,\b,\epsilon)$ with a partial multiplication is called a ``Lie groupoid'' if    
\begin{itemize}
\item[(i)] $\a(gg') = \a(g)$, and $\b(gg') = \b(g')$, 
\item[(ii)] $g(g' g'') = (g g') g''$, 
\item[(iii)] $\a(\widetilde{b}) = \b(\widetilde{b}) = b$,
\item[(iv)] $g \widetilde{\b(g)} = g = \widetilde{\a(g)} g$ ,
\item[(v)] there is $g^{-1} \in \C{G}$ such that $\a(g^{-1}) = \b(g)$ and $\b(g^{-1}) = \a(g)$, and that
\[g^{-1} g = \widetilde{\b(g)}, \qquad gg^{-1} = \widetilde{\a(g)},\]
\end{itemize}
for any $(g,g'),(g',g'') \in \C{G} \ast \C{G}$, any $b \in B$, and any $g \in \C{G}$.

The elements of $B$ are called  ``objects'', whereas the elements of $\C{G}$ are referred as ``arrows'', or ``morphisms'' . A groupoid may also be considered as a category such that all arrows are invertible. We shall denote a Lie groupoid by $\C{G}\rightrightarrows B$, or simply by $\C{G}$ when there is no confusion on the base. Let us, now, recall the examples of Lie groupoids that we shall need in the sequel.

\begin{example}\label{ex-Lie-gr-as-grpd}
Any Lie group $G$ gives rise to a Lie groupoid over the identity element $\{e\}$; the source map and the target map being the constant maps $\a = \b: G \to \{e\}$, and the object inclusion map being the obvious inclusion $\widetilde{e} = e$. Moreover, the partial multiplication of this groupoid is the group multiplication. We denote this Lie groupoid by $G\rightrightarrows \{e\} $, or simply by $G$.
\end{example}

\begin{example}\label{ex-action-groupoid}
Let $G$ be a Lie group, and $M$ a manifold with a smooth $G$-action $M\times G \to M$ from the right. Then $M\times G\rightrightarrows M$ has the structure of a Lie groupoid over $M$ equipped with the source map, the target map, and the object inclusion given by
\begin{eqnarray*}
\a &:& M\times G \lra M, \qquad \a(m,g) := m,
\\
\b &:& M\times G \lra M, \qquad \b(m,g) := m g,
\\
\ve&:& M\lra M\times G, \qquad \ve(m) := (m,e).
\end{eqnarray*}
The partial multiplication is given by
\begin{equation}
(m,g) \cdot (m',g') := (m, gg'), \quad {\rm if} \quad m g = m'.
\end{equation}
The groupoid $M\times G$ is called as the ``action groupoid''. 
\end{example}

\begin{example}\label{ex-coarse-groupoid}
Let $M$ be a manifold. Then the cartesian product $M \times M$  of $M$ with itself is a groupoid over $M$ via the source, target, and the object inclusion  maps given by
\begin{eqnarray*}
\a &:& M\times M \lra M, \qquad \a(m,m') := m, \\
\b &:& M\times M \lra M, \qquad \b(m,m') := m',\\
\ve &:& M\lra M \times M, \qquad \ve(m) := (m,m),
\end{eqnarray*}
whereas the partial multiplication is
\[
(m,m') \cdot (n,n') := (m,n'), \quad {\rm if} \quad m' = n.
\]
This groupoid $M \times M \rightrightarrows M$ is called the ``coarse groupoid'', the ``pair groupoid'', or the ``banal groupoid''.
\end{example}

\begin{example}\label{ex-trivial-groupoid}
Given a manifold $M$, and a Lie group $G$, the triple product $M \times G \times M$ is a Lie groupoid over $M$ by the source, target, and the object inclusion  maps
\begin{eqnarray*}
\a &:& M \times G\times M \lra M, \qquad \a(m,g,m') := m,
\\
\b &:& M\times G\times M \lra M, \qquad \b(m,g,m') := m',
\\
\ve &:& M\lra M \times G\times M, \qquad \ve(m) = \widetilde{m} := (m,e,m),
\end{eqnarray*}
and the partial multiplication
\[
(m,g,m') \cdot (n,g',n') := (m,gg', n'), \quad {\rm if} \quad m' = n. 
\]
The groupoid $M \times G \times M \rightrightarrows M$ is called the ``trivial groupoid''.
\end{example}

\subsubsection{Left and right invariant vector fields}~

\noindent
Given a Lie groupoid $\C{G}\rightrightarrows B$, a vector field $Z \in \Gamma(T\C{G})$ is called ``left invariant'' if 
\[
Z(gg') = T_{g'}\ell_g Z(g'), 
\]
for any $(g,g') \in \C{G}\ast \C{G}$, where $\ell_g:\C{G}\to \C{G}$ is the left translation induced from the partial multiplication, and   $T_{g'}\ell_g:T_{g'}\C{G}\to T_{gg'}\C{G}$ is the tangent lift of this mapping at $g' \in \C{G}$. Similarly, a ``right invariant'' vector field $Z\in \Gamma(T\C{G})$ is one that satisfies 
\[
Z(gg') = T_gr_{g'}Z(g)
\]
for any $(g,g') \in \C{G}\ast \C{G}$, where $r_{g'}:\C{G}\to \C{G}$ is the right translation, and $T_gr_{g'}:T_g\C{G}\to T_{gg'}\C{G}$ is its the tangent lift at $g\in \C{G}$.

\subsubsection{Morphisms of Lie groupoids}~

\noindent
Let $\C{G}$ an $\C{H}$ be two Lie groupoids over the bases $B$ and $C$, respectively. A morphism of Lie groupoids is a pair of smooth maps $\Phi: \C{G} \lra \C{H}$ and $\Phi_0:B\lra C$ compatible with the groupoid multiplications, source, target, and the object inclusion maps. More precisely, a Lie groupoid morphism is a pair $(\Phi,\Phi_0)$ that satisfies
\begin{itemize}
\item[(i)] $(\Phi(g), \Phi(g')) \in \C{H}\ast \C{H}$, 
\item[(ii)] $\Phi(g g') = \Phi(g) \Phi(g')$,
\item[(iii)] $ \a (\Phi(g)) = \Phi_0(\a (g))$,
\item[(iv)] $\b (\Phi(g)) = \Phi_0(\b(g))$,
\item[(v)] $\Phi(\widetilde{b}) = \widetilde{\Phi_0(b)}$,
\end{itemize}
for any $(g,g') \in \C{G}\ast \C{G}$, and any $b \in B$. The requirements may be summarized by the commutativity of the diagram  
\begin{equation} \label{Lgmorp}
\xymatrix{ \C{G} \ar@<1ex>[d]^{\a} \ar@<-1ex>[d]_{\b}
\ar [rr]^{\Phi}&& \C{H}  \ar@<1ex>[d]^{\a} \ar@<-1ex>[d]_{\b}\\ B \ar@/^2pc/[u]^{\ve}
\ar [rr]_{\Phi_0}&& C \ar@/_2pc/[u]_{\ve}
}
\end{equation} 
for each source, target and inclusion map. In order to avoid the notation inflation, we shall not distinguish the source maps of the Lie groupoids  $\C{G}$ and $\C{H}$ with different notations. Instead, we shall make it clear from the context.

\subsubsection{Lie groupoid actions}\label{subsubsect-Lie-grpd-action}~

\noindent
Let $\C{G}$ be a Lie groupoid over the base $B$, and let $f:P\to B$ be a smooth map from a manifold $P$ to the base manifold $B$. Given the product space
\[
P \ast \C{G} := \big\{(p, g) \in P \times \C{G} \mid f(p) = \a(g)\big\},
\]
a smooth map 
\[
\lt: P \ast \C{G} \lra P, \qquad (p,g) \mapsto p \lt g
\]
is called the (right) action of $\C{G}$ on $f$ if 
\begin{itemize}
\item[(i)] $f(p \lt g) = \b(g)$,
\item[(ii)] $(p \lt g) \lt g' = p \lt (gg')$, 
\item[(iii)] $p \lt \widetilde{f(p)} = p$,
\end{itemize}
for any $(p,g) \in P \ast \C{G}$, any $(g,g') \in \C{G} \ast \C{G}$, and any $p \in P$. The definition may be summarized by the commutativity of the diagram
\begin{equation}
\xymatrix{ P: & p \ar [d]_{f} \ar [rr]_{ \lt g} \ar@/^2pc/[rrrr]^{\lt gg'}
&& p \lt g \ar [d]_{f}    \ar [rr]_{ \lt g'} && (p \lt g) \lt g' \ar [d]_{f}  \\ B: &
\a(g)\ar [rr]_{g}\ar@/_2pc/[rrrr]_{gg'}&& \b(g)=\a(g') \ar [rr]_{g'}&&\b(g')
} 
\end{equation} 
The letters $P$ and $B$ refers to the manifolds in which the objects in the corresponding rows belong. 

The left action of a Lie groupoid on a smooth map is defined similarly, \cite{Mackenzie-book,Mack92}. Let $\C{H}$ be a Lie groupoid over the base $B$, and let  $f:P\mapsto B$ be a smooth function from a manifold $P$ to $B$. Then, given the product space
\[
\C{H}\ast P := \big\{(h, p) \in \C{H} \times P \mid \b(h) = f(p)\big\},
\]
the smooth mapping
\[
\rt: \C{H}\ast P \lra P, \qquad (h,p) \mapsto h \rt p
\]
is called a (left) action of $\C{H}$ on $f$ if 
\begin{itemize}
\item[(i)] $f(h \rt p) = \a(h)$,
\item[(ii)] $h' \rt (h \rt p) = (h'h) \rt p$, 
\item[(iii)] $\widetilde{f(p)}\rt p  = p$,
\end{itemize}
 for any $(h,p) \in \C{H} \ast P$, any $(h',h) \in \C{H} \ast \C{H}$, and any $p \in P$. In other words, the following diagram is commutative:
\begin{equation}
\xymatrix{ P: &h'\rt (h \rt p) \ar [d]_{f} 
&& \ar [d]_{f}  \ar [ll]_{ h'\rt} h \rt   p && \ar [d]_{f} \ar [ll]_{ h\rt}p \ar@/_2pc/[llll]_{h'h\rt}\\ B: & 
\a(h')\ar [rr]_{h'}\ar@/_2pc/[rrrr]_{h'h}&& \b(h')=\a(h) \ar [rr]_{h}&&\b(h)
} 
\end{equation} 

We are now ready to conclude with the left and the right actions of a Lie groupoid on another Lie groupoid. Let $\C{G}$ be a Lie groupoid over the base $B$, and let $\C{H}$ be another Lie groupoid over $C$. The left action of $\C{H}$ on $\C{G}$ is defined to be the (left) action of  $\C{H}$ on the source map $\a:\C{G}\to B$, while the right action of $\C{G}$ on $\C{H}$ is similarly defined to be the (right) action of $\C{G}$ on the target map $\b:\C{H}\to C$. For further details on the representations of Lie groupoids we refer the reader to \cite{Brow72,Mack92}. 

\subsection{Lie algebroids and their actions}

\subsubsection{Definition of a Lie algebroid}~

\noindent
A Lie algebroid over a manifold $M$ may be thought of a generalization of the tangent bundle $TM$ of $M$, \cite{Mackenzie-book,Mokr97,Para67}. More technically, given a manifold $M$, a ``Lie algebroid'' $\C{A}$ over the base $M$ is a (real) vector bundle $\tau:\C{A} \to M$, together with a map $a:\C{A} \to TM$ of vector bundles, called the ``anchor map'', and a Lie bracket $[\bullet,\bullet]$ (bilinear, anti-symmetric, satisfying the Jacobi identity) on the space $\Gamma(\C{A})$ of sections, so that the induced $C^\infty(M)$-module homomorphism $a:\Gamma(\C{A})\to \Gamma(TM)$ satisfies
\[
[X,fY] = f[X,Y]+\C{L}_{a(X)}(f)Y
\]
for any $X,Y \in \Gamma(\C{A})$, and any $f\in C^\infty(M)$, where $\C{L}_{a(X)}(f)$ stands for the directional derivative of $f\in C^\infty(M)$ in the direction of $a(X)\in TM$. It, then, follows that 
\[
a([X,Y]) = [a(X),a(Y)]
\]
for any $X,Y \in \Gamma(\C{A})$. Accordingly, a Lie algebroid is denoted by a quintuple $(\C{A},\tau,M,a,[\bullet,\bullet])$, or occasionally by a triple $(\C{A},\tau,M)$ when there is no confusion on the bracket, and the anchor map. Let us now take a quick tour on a bunch of critical examples.

\begin{example} \label{LieAlgebra}
Any Lie algebra $\G{g}$ is a Lie algebroid $\tau:\G{g}\to \{\ast\}$; taking the base manifold $B=\{\ast\}$ to be a one-point set, and the anchor map $a:\G{g}\to TB$ to be the zero map. 
\end{example}

\begin{example}\label{ex-TM-Lie-algoid}

Given any manifold $M$, the tangent bundle $TM$ is a Lie algebroid over the base $M$, where the anchor $a:TM\to TM$ is the identity map.
\end{example}

\begin{example}\label{ex-action-Lie-algoid}
Let $M$ be manifold admitting an infinitesimal left action of a Lie algebra $\G{g}$. As such, there exists a linear map $\G{g}\to \Gamma(TM)$, $\xi\mapsto X_\xi$, preserving the Lie brackets. Then the trivial bundle $M\times \G{g} \to M$, via the projection onto the first component, can be made into a Lie algebroid via the (fiber preserving) anchor map 
\[
a: M\times \G{g} \to TM, \qquad a(m, \xi):=X_\xi(m),
\]
and the bracket given by 
\[
[u,v](m) = [u(m),v(m)]_\G{g} + (\C{L}_{X_{u(m)}}v)(m) - (\C{L}_{X_{v(m)}}u)(m)
\]
regarding the sections of the trivial bundle $M\times \G{g}$ as (smooth) maps $u,v:M\to \G{g}$, referring the Lie derivative of a Lie algebra valued function $w:M\to \G{g}$ along the vector field $X \in \Gamma(TM)$ by $\C{L}_X(w)$, and denoting the Lie bracket on $\G{g}$ as $[\bullet,\bullet]_\G{g}$. This Lie algebroid is called the ``action Lie algebroid'' (or the ``transformation Lie algebroid''), \cite{HiggMack90,KosmSchw-Mack02,Lu97,Mokr97}.
\end{example}

\subsubsection{Lie algebroid of a Lie groupoid}\label{susubsect-Lie-algbrd-of-a-lie-grpd}~

\noindent
We now recall from \cite{CrampinSaunders-book,Mackenzie-book-II,Para67} that how a Lie algebroid associates to a given Lie groupoid in a canonical way. Let $\C{G}$ be a Lie groupoid  over $B$, $\a:\C{G} \to B$ be the source map, and $T_g\a:T_g \C{G} \to T_{\a(g)}B$ be its tangent lift at $g\in \C{G}$. Along the lines of \cite{MarrDiegMart06}, the ``Lie algebroid associated to the Lie groupoid'' $\C{G}$ is defined to be the vector bundle $(\C{A}\C{G},\tau,B)$ whose fibers are given by  
\[
\C{A}_b\C{G} := \ker T_{\ve(b)}\a,
\]
In other words, $\C{A}\C{G}$ corresponds to the vertical bundle on $\C{G}$ with respect to the fibration $\a:\C{G}\rightarrow B$. We shall denote a typical section of the fibration $\tau:\C{A}\C{G}\to B$  by $X\in \Gamma(\C{A}\C{G})$. Moreover, the anchor map $a:\C{AG} \to TB$ is given by 
\begin{equation}
a(X(b))=T_{\widetilde{b}}\b \circ X (b)
\end{equation}
where $T_{\widetilde{b}}\b:T_{\widetilde{b}}\C{G}\to T_bB$ is the tangent lift of the target map $\b:\C{G}\to B$ at $\widetilde{b}=\ve(b) \in \C{G}$. Finally, the definition of  the Lie bracket 
$[\bullet,\bullet]_{\C{AG}}$ on the sections of $\C{AG}\rightarrow B$ follows from the direct analogy with that of the Lie groups. More precisely, the bracket on the (associated) Lie algebroid is defined by means of the Jacobi-Lie bracket of the left (or the right) invariant vector fields on the groupoid; \cite{CostDazoWein87,Mackenzie-book,MarrDiegMart06}. 

For a Lie groupoid $\C{G}\rightrightarrows B$, there exists an isomorphism of $C^\infty(\C{G})$-modules  between the sections $\Gamma(\C{AG})$ of the Lie algebroid of the Lie groupoid and the left-invariant (resp. the right invariant) vector fields on $\C{G}$, \cite{Mackenzie-book-II}. To any $X\in \Gamma(\C{AG})$, there corresponds a left invariant vector field $\overleftarrow{X} \in \Gamma(T\C{G})$ via 
\begin{equation}\label{left-inv-vect-fields}
\overleftarrow{X}(g): = T_{\widetilde{\b(g)}}\ell_g X(\b(g)).
\end{equation}
As such, a left invariant vector field satisfies
\[
\overleftarrow{X}(g_1g_2) = T_{g_2}\ell_{g_1}\overleftarrow{X}(g_2),
\]
for any (composable) $g_1,g_2 \in \C{G}$. Conversely, if $\overleftarrow{X} \in \Gamma(T\C{G})$ is a left invariant vector field, then 
\[
X(b):=\overleftarrow{X}(\widetilde{b})
\]
defines a section $X\in \Gamma(\C{AG})$. This identification of the sections of the Lie algebroid $\C{AG}$ with the left invariant vector fields of $\C{G}$ allows us to define (induce) the Lie bracket on the sections. Given $X,Y \in \Gamma(\C{AG})$, we define their Lie bracket as
\begin{equation} \label{braAG}
[X,Y]_{\C{AG}} (b):= [\overleftarrow{X},\overleftarrow{Y}](\widetilde{b})
\end{equation}
where the bracket on the right hand side is the Jacobi-Lie bracket of left invariant vector fields on the Lie algebroid $\C{G}$. 

As for the right invariant vector fields, given any $X \in \Gamma(\C{AG})$, a right invariant vector field $\overrightarrow{X} \in \Gamma(T\C{G})$ is defined through 
\begin{equation}\label{right-inv-vect-fields}
\overrightarrow{X}(g) := - T_{\widetilde{\a(g)}}r_g \circ T_{\widetilde{\a(g)}}{\rm inv}(X(\a(g)),
\end{equation}
where ${\rm inv}:\C{G}\to \C{G}$ stands for the inversion, and $T_{\widetilde{\a(g)}}r_g:T_{\widetilde{\a(g)}}\C{G}\to T_g\C{G}$ is the tangent lift of the right translation by $g\in \C{G}$ at $\ve(\a(g)) = \widetilde{\a(g)} \in \C{G}$. 

The right invariant vector field $\overrightarrow{X} \in \Gamma(T\C{G})$ corresponding to $X \in \Gamma(\C{AG})$ is equivalently given by
\begin{equation} \label{right-inv-vect-fields-curve}
\overrightarrow{X}(g)=-\dt\,x^{-1}(t)g,
\end{equation}
where $x(t)$ is any curve through $\C{G}$ with constant source, that is $\a(x(t)) = \a(g) \in B$, and is tangent to $X(\a(g)) \in \C{A}_{\a(g)}\C{G} \subseteq T_{\widetilde{\a(g)}} \C{G}$, that is $\dot{x}(0)=X(\a(g))$.

It follows at once from the definition that 
\[
\overrightarrow{X}(g_1g_2) = T_{g_1}r_{g_2}\overrightarrow{X}(g_1)
\]
for any (composable) $g_1,g_2 \in \C{G}$. In this case, the relation between the Lie brackets is given by 
\[
 \overrightarrow{[X,Y]}_{\C{AG}} = - [\overrightarrow{X},\overrightarrow{Y}],
\]
where the bracket on the left side is the one on the Lie algebroid level, whereas the bracket on the right hand side is the Jacobi-Lie bracket of vector fields on the manifold $\C{G}$. 

Let us next present examples on the construction of the left and the right invariant vector fields on the Lie groupoids in Example \ref{ex-Lie-gr-as-grpd} - \ref{ex-trivial-groupoid} which will be needed in the sequel.

\begin{example} 
Let $G$ be a Lie group, and $G \rightrightarrows \{e\}$ be the Lie groupoid in Example \ref{ex-Lie-gr-as-grpd}. Since the inclusion map is given by $\ve(e)=\widetilde{e}=e$, the tangent space at $\widetilde{e}\in G$ is $T_e G= \mathfrak{g}$, and the source map $\a:G\to \{e\}$ is constant, the kernel of the tangent lift of the source map (the total space of the associated Lie algebroid $\C{A}G$) may be identified with the Lie algebra $\G{g}$ of the group $G$. The right and the left invariant vector fields associated to $\xi \in \G{g}$ are then given by
\begin{eqnarray} \label{R-L-inv-VF-Lie}
\overrightarrow{\xi}: G \lra TG, \qquad g\mapsto T_{e}r_{g}(\xi), \\ 
\overleftarrow{\xi}:G \lra TG, \qquad g\mapsto T_{e}\ell_{g}(\xi).
\end{eqnarray}
\end{example}

\begin{example}\label{ex-pair-grpd-left-right-inv-v-f}
The Lie algebroid of the coarse groupoid $M\times M \rightrightarrows M$ of Example \ref{ex-coarse-groupoid} is the Lie algebroid $(TM,\tau_M,M)$ of Example \ref{ex-TM-Lie-algoid}. Indeed, consider a curve $(m,n_t) \in  M\times M$ with constant source, so that $n_0 = n \in M$ and $\dot{n}_0=X \in T_nM$. Then its derivative (at $t=0$) yields a vector $(\t_m,X) \in T_mM \times T_nM$. Given any section $(\t_m,X) \in \C{A}_m(M\times M) $, the corresponding right and left invariant vector fields on $M\times M$ are computed to be
\begin{eqnarray} \label{R-L-inv-VF-Banal}
\overrightarrow{(\t_m,X) }&:&M\times M \lra T(M\times M) = TM\times TM, \qquad (m,n)\mapsto(-X,\t_m), \\ 
\overleftarrow{(\t_m,X) }&:&M\times M \lra T(M\times M) = TM\times TM, \qquad (m,n)\mapsto(\t_m,X).
\end{eqnarray}
\end{example}

\begin{example}\label{ex-action-grpd-left-right-inv-v-f}
The Lie algebroid of the action groupoid $M\times G \rightrightarrows M$ of Example \ref{ex-action-groupoid} is the transformation Lie algebroid $(M\times \G{g},pr_1,M)$ of Example \ref{ex-action-Lie-algoid}. Indeed, the derivative of a curve $(m,g_t) \in M\times G$ of constant source, such that $g_0=e$ and that $\dot{g}_0 = \xi \in \G{g}$, yields a vector $(\t_m,\xi) \in T_mM\times T_eG$. As such, the total space  $\C{A}(M\times G)$ may be identified with the cartesian product $M\times \G{g}$. For any  $(\t_m,\xi) \in \C{A}_m(M\times G) $, the left invariant and the right invariant vector fields are given by
\[
\overrightarrow{(\t_m,\xi)}\left( m,g\right) =\left( -\xi^\dagger(
m) ,\overrightarrow{\xi }\left(g\right) \right) ,\text{ \ \ }%
\overleftarrow{(\t_m,\xi)}\left(m,g\right) =\left(\t_m ,\overleftarrow{\xi 
}\left(g\right) \right) ,
\]
where the infinitesimal generator of the right action is computed to be
\begin{equation}\label{dagger}
\xi^\dagger(m) := \dt\,me_t
\end{equation}
for any curve $e_t\in G$ with $e_0=e\in G$, and $\dot{e}_0 = \xi \in \G{g}$.
\end{example}

\begin{example}\label{ex-triv-grpd-left-right-inv-v-f}
The Lie algebroid of the trivial groupoid of Example \ref{ex-trivial-groupoid}  is the Lie algebroid $(M\times \G{g}) \oplus TM$. Indeed, given a curve $(m_t,e_t,n_t) \in M\times G\times M$ with $e_0 = e \in G$, $m_0 = n_0 = m \in M$, $\dot{e}_0=\xi \in \G{g}$, $\dot{m}_0=X \in T_mM$, and $\dot{n}_0=Y \in T_mM$, 
\[
T_{(m,e,m)}\a (X,\xi,Y) = X.
\]
As such, 
\[
\C{A}_m(M\times G\times M) = \big\{(\t_m,\xi,Y)\mid \xi \in \G{g},\,Y\in T_mM\big\}
\]
which yields $\C{A}(M\times G \times M) \cong (M\times \G{g}) \oplus TM$.
In order to compute the left (resp. right) invariant vector fields, let $(m,g,n) \in M\times G \times M$, and let $(\t_n,\xi,X) \in \C{A}_n(M \times G \times M)$. Accordingly, let $(n,e_t,n_t) \in M\times G \times M$ be a curve such that $e_0= n$, $n_0=n$, $\dot{e}_0=\xi$, and $\dot{n}_0=Y\in T_nM$. We then see that
\begin{equation}\label{left-inv-(X,u)}
\overleftarrow{(\t_n,\xi,Y) }(m,g,n) = \dt\,(m,g,n)(n,e_t,n_t) = \dt\,(m,ge_t,n_t) =(\t_m, \overleftarrow{\xi}(g), Y).
\end{equation}
Similarly, given $(\t_m,\xi,X)$, if $(m,e_t,m_t) \in M\times G \times M$ be a curve such that $e_0= e$, $m_0=m$, $\dot{e}_0=\xi$, and $\dot{m}_0=X\in T_mM$, then
\begin{align*}
& \overrightarrow{(\t_m,\xi,X)} (m,g,n)= -\dt\, (m,e_t,m_t)^{-1}(m,g,n) = -\dt\, (m_t,e_t^{-1},m)(m,g,n) =\\
&  -\dt\, (m_t,e_t^{-1}g,n) = (-X,\overrightarrow{\xi}(g),\t_n).
\end{align*}
\end{example}


\subsubsection{Morphisms of Lie algebroids}~

\noindent
Given two Lie algebroids $(\C{A},\tau,M)$ and $(\C{A}',\tau',M')$, a morphism from $(\C{A},\tau,M)$ to $(\C{A}',\tau',M)$ is a vector bundle morphism preserving the achors as well as the brackets. That is, a pair $(\phi:\C{A} \to \C{A}'; \phi_0:M \to M')$ such that
\[
\xymatrix{
\C{A} \ar[r]^\phi\ar[d]_\tau & \C{A}' \ar[d]^{\tau'} \\
M\ar[r]_{\phi_0} & M',
}
\]
that
\[
a'\circ\phi = T\phi_0 \circ a,
\]
where $a:\C{A} \to TM$ and $a':\C{A}'\to TM'$ are the respective anchor maps, and finally that
\[
\phi([X,Y]) = [\phi(X),\phi(Y)].
\]

It is possible to derive a Lie algebroid morphism starting from a Lie groupoid morphism as follows. Given two Lie groupoids $\C{G}\rightrightarrows B$  and $\C{H}\rightrightarrows C$. Let $$\Phi:\C{G}\to \C{H}; \qquad \Phi_0:M \to N$$ be a morphism of Lie groupoids. Then, for $\C{A}_m\C{G} \ni X = \dt\, x_t$, where $\a(x_t) = m$,
\[
\C{A}_m\Phi:\C{A}_m\C{G} \to \C{A}_{\Phi_0(m)}\C{H};\qquad  X\mapsto \dt\,\Phi(x_t)
\]
defines a morphism $\C{AG}\to \C{AH}$ of Lie algebroids associated with the Lie groupoids $\C{G}$ and $\C{H}$, respectively. We refer the reader to \cite[Sect. 3.5]{Mackenzie-book} for further details.

\section{Matched pairs of Lie groupoids and matched pairs of Lie algebroids }\label{matched-Lie-groids-Lie-algoids}

\subsection{Matched pairs of Lie groupoids} \label{mplg}

\subsubsection{Definition of a matched Lie groupoid}~

\noindent
In this subsection we recall, mainly from \cite{Mokr97,Mack92}, the groupoid level of the matched pair theory of \cite{Maji90,Majid-book}. Let $\C{G}\rightrightarrows B$ and $\C{H}\rightrightarrows B$ be two Lie groupoids over the same base $B$, and let $\C{H}$ act on $\C{G}$ from the left by
\begin{equation} \label{rt}
\rt: \C{H} \ast \C{G} \lra \C{G}, \qquad (h, g') \mapsto h \rt g',
\end{equation}
where we recall that the set $\C{H} \ast \C{G}$ of composable elements consists of the pairs $(h,g)\in \C{H} \times \C{G}$ such that $\b(h)=\a(g)$. Being a left action, \eqref{rt} satisfies
\begin{itemize}
\item[(i)] $\a(h)=\a(h\rt g')$ ,
\item[(ii)] $(h'h) \rt g' =h' \rt (h \rt g')$
\item[(iii)] $\widetilde{\a(h)}\rt g' = g'$ for any  $h \in \C{H}$
\end{itemize}
for any  $(h, g')\in \C{H}\ast \C{G}$, any $(h',h) \in \C{H}\ast \C{H}$, and any  $h \in \C{H}$. Let also $\C{G}$ act on $\C{H}$ from the right by 
\begin{equation}\label{lt}
\lt: \C{H} \ast \C{G} \lra \C{H}, \qquad (h, g') \mapsto h \lt g'.
\end{equation}
Then, being a right action, \eqref{lt} satisfies
\begin{itemize}
\item[(iv)] $\b(g')=\b(h\lt g')$,
\item[(v)] $ h \lt g'g = (h \lt g') \lt g$, 
\item[(vi)] $ h\lt \widetilde{\b(g')} = h$,
\end{itemize}
for any  $(h, g')\in \C{H}\ast \C{G}$, any $(g',g) \in \C{G}\ast \C{G}$, and any  $g' \in \C{G}$. Now, the pair $(\C{G},\C{H})$ is called a ``matched pair of Lie groupoids'' if, in addition, the compatibilities 
\begin{itemize}
\item[(vii)] $\b(h \rt g') = \a(h \lt g')$,
\item[(viii)] $h \rt (g' g) = (h \rt g')  ((h \lt g') \rt g)$, 
\item[(ix)] $(h'  h) \lt g' = (h' \lt (h \rt g'))  (h \lt g')$, 
\end{itemize}
are also satisfied for any $(h,g') \in \C{H} \ast \C{G}$, any $(g', g) \in \C{G} \ast \C{G}$, and any $(h', h) \in \C{H} \ast \C{H}$. 



Then, the product space 
\[
\C{G} \bowtie \C{H}:=\C{G} \ast \C{H}  = \big\{(g,h) \in \C{G} \times \C{H} \mid \b(g)=\a(h)\big\}
\]
becomes a Lie groupoid by the partial multiplication, on
\[
(\C{G} \bowtie \C{H}) \ast (\C{G} \bowtie \C{H}):=\big\{((g,h),(g',h')) \in (\C{G} \bowtie \C{H})\times (\C{G} \bowtie \C{H})\mid \b(h) = \a(g')\big\},
\]
given by 
\begin{equation} \label{mpmul}
(\C{G} \bowtie \C{H}) \ast (\C{G} \bowtie \C{H}) \longrightarrow (\C{G} \bowtie \C{H}), \qquad
\left(  \left(  g,h \right) , \left(  g',h' \right)  \right) \mapsto \left(  g\left(  h\rt
g'\right)  ,\left(  h\lt g' \right)  h' \right).
\end{equation}  
The source and target maps of the ``matched pair Lie groupoid'' $\C{G} \bowtie \C{H}$ are given by
\begin{align*}
& \a: \C{G} \bowtie \C{H} \longrightarrow B, \qquad (g,h)  \mapsto \a(g),
\\
& \b: \C{G} \bowtie \C{H} \longrightarrow B, \qquad (g,h)  \mapsto \b(h),
\end{align*}
respectively. The object inclusion map of the matched pair Lie groupoid is defined in terms of those on  $\C{G}$ and $\C{H}$ as
\[
\ve: B \longrightarrow \C{G} \bowtie \C{H}, \qquad  b \mapsto (\widetilde{b},\widetilde{b}).
\]
The relation between the matched pair Lie groupoid $\C{G} \bowtie \C{H}$ and the individual Lie groupoids $\C{G}$ and $\C{H}$ is given in \cite[Thm. 2.10]{Mack92} that we record below. 
\begin{proposition}\label{prop-grpd-embeddings}
A pair $(\C{G},\C{H})$ of groupoids is a matched pair of groupoids if and only if the manifold $\C{G} \ast \C{H}$ has the structure of a Lie groupoid, such that
\begin{itemize}
\item[(i)] the maps $\C{G} \to \C{G} \ast \C{H}$ given by $g\mapsto (g, \widetilde{\b(g)})$, and $\C{H} \to \C{G} \ast \C{H}$ given by $h \mapsto (\widetilde{\a(h)},h)$ are morphisms of Lie groupoids, and
\item[(ii)] the multiplication $((g, \widetilde{\b(g)}), (\widetilde{\a(h)},h)) \mapsto (g, h) \in \C{G} \ast \C{H}$ is a diffeomorphism.
\end{itemize}
\end{proposition}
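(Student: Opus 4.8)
The plan is to prove the two implications separately, following the template of the classical bicrossed-product (matched pair) theorem for groups, while keeping careful track of the source and target maps that replace the single identity of the group case. For the direction ($\Rightarrow$), assume $(\C{G},\C{H})$ is a matched pair. Then the construction recorded just above the statement already equips $\C{G} \ast \C{H} = \C{G} \bowtie \C{H}$ with the Lie groupoid structure whose partial multiplication is \eqref{mpmul}, so it remains only to verify (i) and (ii). For (i) I would check that $\iota_{\C{G}}\colon g \mapsto (g,\widetilde{\b(g)})$ intertwines the source, target, and object inclusion maps, which is immediate from the formulas for $\a,\b,\ve$ on $\C{G} \bowtie \C{H}$, and that it is multiplicative: expanding $(g,\widetilde{\b(g)})\cdot(g',\widetilde{\b(g')})$ via \eqref{mpmul} and simplifying with the unit identities (iii) $\widetilde{\a(h)} \rt g' = g'$ and (vi) $h \lt \widetilde{\b(g')} = h$ collapses the product to $(gg',\widetilde{\b(gg')})$. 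The argument for $\iota_{\C{H}}\colon h \mapsto (\widetilde{\a(h)},h)$ is symmetric. For (ii), computing $(g,\widetilde{\b(g)})\cdot(\widetilde{\a(h)},h)$ with $\b(g)=\a(h)$ through \eqref{mpmul}, again using the unit identities, yields exactly $(g,h)$; since this map is a restriction of the partial multiplication and is visibly invertible on $\C{G} \ast \C{H}$, it is a diffeomorphism.

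For the direction ($\Leftarrow$), assume $\C{G} \ast \C{H}$ carries a Lie groupoid structure satisfying (i) and (ii), and reconstruct the two actions. By (ii) every composable pair factors uniquely as $(g,h) = \iota_{\C{G}}(g)\cdot\iota_{\C{H}}(h)$, so in particular the reversed product $\iota_{\C{H}}(h)\cdot\iota_{\C{G}}(g')$, whenever it is defined in the big groupoid, admits a unique factorization, and I would define $h \rt g' \in \C{G}$ and $h \lt g' \in \C{H}$ by declaring
\[
\iota_{\C{H}}(h)\cdot\iota_{\C{G}}(g') = \iota_{\C{G}}(h \rt g')\cdot\iota_{\C{H}}(h\lt g').
\]
First I would identify when the left-hand side is composable: since $\iota_{\C{G}},\iota_{\C{H}}$ are groupoid morphisms by (i), they intertwine $\a,\b$ with the base maps, and reading off the source and target on $\C{G} \ast \C{H}$ shows the product is defined precisely when $\b(h)=\a(g')$, that is, on $\C{H} \ast \C{G}$. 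The source/target axioms (i) and (iv) for the actions then fall out of the morphism property together with the source and target formulas, while the unit and associativity axioms (ii), (iii), (v), (vi) are obtained by expanding associators such as $\iota_{\C{H}}(h'h)\cdot\iota_{\C{G}}(g')$ and $\iota_{\C{H}}(h)\cdot\iota_{\C{G}}(g'g)$ in two ways and invoking uniqueness of factorization.

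The compatibility conditions (vii)--(ix) would be extracted the same way. For instance (viii) follows by computing $\iota_{\C{H}}(h)\cdot\iota_{\C{G}}(g'g)$ once by first associating the two $\C{G}$-factors and then factoring, and once by inserting the definitions of $h \rt g'$ and $h \lt g'$ and re-factoring, then matching the unique $\C{G}$- and $\C{H}$-components; condition (ix) is the mirror computation with two $\C{H}$-factors, and (vii) records the matching of the intermediate source and target. Along the way one verifies that the partial multiplication on $\C{G} \ast \C{H}$ recovered from these actions is exactly \eqref{mpmul}, so that the supplied groupoid structure is the bicrossed one and $(\C{G},\C{H})$ is indeed a matched pair.

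The main obstacle I expect is bookkeeping rather than conceptual. Unlike the group case, every product is only partially defined, so at each step one must confirm that the elements being multiplied, and the reconstructed elements $h \rt g'$ and $h \lt g'$, are genuinely composable, i.e.\ that their sources and targets match. Establishing precisely this, namely that the morphism conditions (i) pin down the base maps and hence force all the relevant composability relations, is the delicate point that makes the uniqueness-of-factorization argument of the group case go through verbatim for groupoids.
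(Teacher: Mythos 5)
The paper itself gives no proof of this proposition: it is quoted from \cite[Thm.~2.10]{Mack92}, and your argument is precisely the standard one used there --- the forward direction by direct verification on the bicrossed product \eqref{mpmul}, and the backward direction by factoring $\iota_{\C{H}}(h)\cdot\iota_{\C{G}}(g')$ uniquely (and smoothly, via the diffeomorphism in (ii)) to define the two actions, then extracting all the action axioms and the compatibilities (vii)--(ix) from associativity and uniqueness of factorization. One detail to repair: when you collapse $(g,\widetilde{\b(g)})\cdot(g',\widetilde{\b(g')})$, the second component is $\big(\widetilde{\b(g)}\lt g'\big)\widetilde{\b(g')}$, to which axiom (vi) does not apply; you need instead the identity $\widetilde{\a(g')}\lt g'=\widetilde{\b(g')}$ of \eqref{act} (recorded in the paper's Remark, and derivable from (iii), (iv) and (ix)), and symmetrically $h\rt\widetilde{\b(h)}=\widetilde{\a(h)}$ for the $\iota_{\C{H}}$ computation.
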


Picturing an element $(g,h) \in \C{G} \bowtie \C{H}$ by
\begin{equation} \label{corner}
\xymatrix{&&\b(h)\\ \\ \a(g) \ar [rr]_{g \quad}&& \b(g)=\a(h), \ar[uu]^{h}
}
\end{equation}  
the partial multiplication \eqref{mpmul} on the matched pair groupoid may be illustrated as 
\begin{equation}
\xymatrix{&&&&&\b(h')\\ \\ &&\b(h)=\a(g')
\ar [rrr]_{g'}&&&t(g')=\b(h\lt g')\ar[uu]^{h'}\\ \\ \a(g)\ar[rr]_{g\qquad\qquad}&&\ar[uu]^{h}
\b(g)=\a(h)=\a(h\rt g')\ar [rrr]_{h\rt g'}&&& \b(h\rt g')=\a(h\lt g') \ar[uu]^{h\lt g'}.
}
\end{equation}  

The following remark concerns the actions on the identity elements.
\begin{remark}
Let $(\C{G},\C{H})$ be a matched pair of Lie groupoids. Given $g\in \C{G}$ with $\a(g)=b_1$ and $\b(g)=b_2$, and $h\in \C{H}$ with $\a(h) = b_3$ and $\b(h)=b_4$, we see at once that
\begin{equation}\label{act}
h \rt \widetilde{b_4} = \widetilde{b_3} \in \C{G}, \qquad \widetilde{b_1} \lt g = \widetilde{b_2} \in \C{H},
\end{equation}
and that,
\begin{equation}\label{act2}
\widetilde{b_1} \rt g = g, \qquad h \lt \widetilde{b_4} = h.
\end{equation}
\end{remark}

\subsubsection{Matched pair decomposition of the trivial groupoid} \label{ex-triv-grpd-as-matched-pair}~

\noindent
Given the action groupoid $\C{G}=M\times G$ of Example \ref{ex-action-groupoid}, and the coarse groupoid $\C{H}=M\times M$ of Example \ref{ex-coarse-groupoid}, let us consider the set 
\[
\C{H}\ast\C{G}=(M \times M) \ast (M \times G)=\big\{((m', m;m, g))\in (M \times M) \times (M \times G)\big\}
\]
of composable elements. The left action
\begin{equation}\label{left-act}
\rt: (M \times M) \ast (M \times G) \lra (M \times G), \qquad (m', m) \rt (m, g) := (m', g)
\end{equation}
of the action groupoid $\C{G}=M\times G$ on the coarse groupoid $\C{H}=M\times M$, and the right action
\begin{equation}\label{right-act}
\lt: (M \times M) \ast (M \times G) \lra (M \times M), \qquad (m', m) \lt (m, g) := (m'g, mg)
\end{equation}
of the action groupoid $\C{G}=M\times G$ on the coarse groupoid $\C{H}=M\times M$ satisfies the conditions (i)-(ix) of the previous subsection. Thus, the set 
\begin{equation}
\C{G}\ast\C{H}=(M \times G)\ast (M\times M)=\big\{(m,g;mg,m')\in (M \times G)\times (M\times M)\big\}
\end{equation}
of composable elements form a Lie groupoid $\C{G}\bowtie\C{H}=(M \times G)\bowtie (M\times M)$ over the base manifold $M$. The source, target and object inclusion maps are computed to be
\begin{eqnarray}
\a &:&(M \times G)\bowtie (M\times M) \longrightarrow M, \qquad (m,g;mg,m')  \mapsto m,
\\
\b &:& (M \times G)\bowtie (M\times M) \longrightarrow M, \qquad (m,g;mg,m')  \mapsto m',
\\
\ve &:& M \longrightarrow (M \times G)\bowtie (M\times M), \qquad  m \mapsto (m,e;m,m).
\end{eqnarray}
In order to proceed to the partial multiplication, we consider the product space
\begin{eqnarray*}
&&\left((M \times G)\bowtie (M\times M)\right) \ast \left((M \times G)\bowtie (M\times M)\right):=\\&&\big\{(m,g;mg,m'),(m',h;m'h,n):m',m,n\in M \text{ and } g,h\in G \big\}.
\end{eqnarray*}
The partial multiplication, given by \eqref{mpmul}, then appears as
\begin{eqnarray*}
(m,g;mg,m')\ast(m',h;m'h,n)&=& \left(  (m,g) \left(  (mg,m')\rt
(m',h)\right);\left(  (mg,m')\lt (m',h) \right)  (m'h,n) \right)
\\
&=& (  (m,g) (mg,h);(mgh,m'h)(m'h,n) )
\\
&=& (m, gh;mgh,n).
\end{eqnarray*}
Accordingly, the inversion is computed to be
\[
(m,g;mg,n)^{-1} = (n,g^{-1}; ng^{-1},m).
\]

The matched pair Lie groupoid $(M \times G)\bowtie (M\times M)$ is identified with the trivial Lie groupoid $M\times G \times M$ via
\begin{equation}\label{map-Phi}
\Phi: M\times G \times M  \longrightarrow (M \times G)\bowtie (M\times M), \qquad (m,g,n) \mapsto (m,g;mg,n),
\end{equation}
see for instance, \cite{Mokr97}.

Let us finally note that the map \eqref{map-Phi} that gives the matched pair decomposition of the trivial groupoid is differentiated to a Lie algebroid morphism 
\begin{align}\label{map-APhi}
\begin{split}
& \C{A}_m\Phi:\C{A}_m(M\times G\times M) \lra \C{A}_m((M\times G)\bowtie (M\times M) ), \\
& (\t_m,\xi,Y) \mapsto (\t_m,\xi;\xi^\dagger(m),Y).
\end{split}
\end{align}
Indeed, for a curve $(m,e_t,m_t) \in M\times G \times M$ with $e_0=e$, $m_0 = m$, $\dot{e}_0 = \xi\in\G{g}$, and $\dot{m}_0 = Y \in T_mM$, recalling \eqref{dagger} we compute
\begin{align*}
(\C{A}_m\Phi)(\t_m,\xi,Y) = \dt\, \Phi(m,e_t,m_t) =  \dt\, (m,e_t;me_t,m_t) = (\t_m,\xi;\xi^\dagger(m),Y).
\end{align*}

\subsection{Matched pairs of Lie algebroids}~

\noindent 
Let us begin with a brief discussion on the representation of a Lie algebroid on a vector bundle from \cite{HiggMack90,Mokr97,Lu97}. 

Let $(\C{A},\tau,M,a,[\bullet,\bullet])$ be a Lie algebroid, and let $(E,\pi,M)$ be a vector bundle over the same base manifold $M$. A left representation of $(\C{A},\tau,M)$ to $(E,\pi,M)$ is a bilinear map
\[
\rho:\Gamma(\C{A})\times \Gamma(E) \lra \Gamma(E), \qquad (X,s)\mapsto \rho_X(s)=\rho(X,s)
\] 
such that
\begin{itemize}
\item[(i)] $\rho_{fX}(Y) = f\rho_X(Y)$,
\item[(ii)] $\rho_X(fY) = f\rho_X(Y) + (a(X)f)Y$,
\item[(iii)] $\rho_{[X,\widetilde{X} ]}(Y) = \rho_X(\rho_{\widetilde{X}}(Y)) - \rho_{\widetilde{X}}(\rho_X(Y))$,
\end{itemize}
for any $X,\widetilde{X} \in \Gamma(\C{A})$, any $Y\in \Gamma(E)$, and any $f \in C^\infty(M)$. A right representation of a Lie algebroid on a vector bundle is defined similarly. 

Two Lie algebroids with the same base manifold form a matched pair of Lie algebroids if the direct sum of the total spaces of the Lie algebroids has a Lie algebroid structure on the same base such that the individual Lie algebroids are Lie subalgebroids of the direct sum, \cite{Mokr97}. More precisely, let $(\C{A},\tau,M,a,[\bullet,\bullet])$ and $(\C{B},\kappa,M,b,[\bullet,\bullet])$ be two Lie algebroids over the same base $M$ with mutual representations 
\[
\rho: \Gamma(\C{B})\times \Gamma(\C{A}) \to \Gamma(\C{A}), \qquad \rho': \Gamma(\C{A})\times \Gamma(\C{B}) \to \Gamma(\C{B}),
\] 
satisfying
\begin{itemize}
\item[(i)] $\rho_Y[X,\widetilde{X}] = [\rho_Y(X),\widetilde{X}] + [X,\rho_Y(\widetilde{X})] - \rho_{\rho'_X(Y)}(\widetilde{X}) + \rho_{\rho'_{\widetilde{X}}(Y)}(X)$,
\item[(ii)] $\rho'_X[Y,\widetilde{Y}] = [\rho_X(Y),\widetilde{Y}] + [Y,\rho_X(\widetilde{Y})] - \rho'_{\rho_Y(X)}(\widetilde{X}) + \rho'_{\rho_{\widetilde{Y}}(X)}(Y)$,
\item[(iii)] $[b(Y),a(X)]=a(\rho_Y(X)) - b(\rho'_X(Y))$,
\end{itemize}
for any $X,\widetilde{X} \in \Gamma(\C{A})$, and any $Y,\widetilde{Y} \in \Gamma(\C{B})$. Then, the direct sum vector bundle $\C{A}\bowtie \C{B}:=\C{A}\oplus \C{B}$ has the structure of a Lie algebroid by the bracket given by
\begin{equation}\label{Lie-algoid-actions}
[Y,X] = \rho(Y,X) - \rho'(X,Y)
\end{equation}
 for any $X \in \Gamma(\C{A})$, and any $Y\in \Gamma(\C{B})$. The pair $(\C{A},\C{B})$ of Lie algebroids is called a ``matched pair of Lie algebroids'', whereas the vector bundle $\C{A}\bowtie \C{B}$ is called a ``matched pair Lie algebroid''. 
 
%


\subsection{Lie algebroid actions induced from Lie groupoid actions} ~

\noindent
We devote the present subsection on the infinitesimal versions of the mutual actions of a matched pair $(\C{G},\C{H})$ of Lie groupoids, over a base manifold $B$.

To this end, let $h\in \C{H}$, and  let $x_t\in \C{G}$ be a curve so that $\b_H(h)=\a_G(x_t)=b \in B$. Since the curve has a constant source, its derivative at $t=0$ qualifies as a Lie algebroid element, which we denote by $X\in \C{A}_b\C{G}$. Now, using the left action, we define the curve $\overline{x}_t:=h\rt x_t\in \C{G}$, whose source is constant as well; $\a(h\rt x(t))=\a(h)=c\in B$. As such, the time derivative (at $t=0$) of the latter also lies in the kernel of the tangent mapping of the source map, and we arrive at the element 
\begin{equation} \label{honX}
h\rt X:=\dt (h\rt x_t) \in \C{A}_c\C{G}.
\end{equation}
On the other hand, the right action allows us to define the curve $h \lt x_t\in\C{H}$ which passes through $h\in \C{H}$ at $t=0$ by \eqref{act2}. We note that the source of the curve $h \lt x_t\in\C{H}$ is not necessarily constant, and that its time derivative  
\begin{equation} \label{Xonh}
X^{\dagger}(h)=h\lt X:=\dt (h\lt x_t) \in T_h\C{H}
\end{equation} 
may not be in $\C{A}_c\C{H}$.

Let, next, $y_t\in\C{H}$ be a curve with constant source, say $\a_\C{H}(y_t) = c = \a_\C{G}(g) \in B$ for some $g\in \C{G}$, and thus generating a Lie algebroid element $Y\in \C{A}_c\C{H}$. Then the left infinitesimal action of $\C{AH}$ on $\C{G}$ is defined as 
\begin{equation} \label{Yong}
Y^{\dagger}(g):= \dt y^{-1}_t\rt g \in T_g\C{G}.
\end{equation}
Similarly, the right action of $\C{G}$ on $\C{H}$ can be lifted to the right action of $\C{G}$ on $\C{AH}$ as follows. Starting with the curve $y_t \in \C{H}$, we define the curve $(y^{-1}_t\lt g)^{-1}\in \C{H}$. The source of the latter curve is constant, hence, at $t=0$ the curve is tangent to an element in $\C{A}_c\C{H}$, which is given by 
\begin{equation} \label{gonY}
Y\lt g:= \dt (y^{-1}_t\lt g)^{-1} \in \C{A}_c\C{H}.
\end{equation}

\subsection{Lie algebroid of a matched pair Lie groupoid}\label{subsect-Lie-algbrd-of-matched-Lie-grpd} ~

\noindent
In Subsection \ref{susubsect-Lie-algbrd-of-a-lie-grpd}, we have seen how a Lie algebroid is associated to a Lie groupoid. In the present subsection, we shall revisit this construction for a matched Lie groupoid, say $\C{G}\bowtie\C{H}$ to arrive at the Lie algebroid $\C{A}(\C{G}\bowtie\C{H})$. We shall conclude with the explicit isomorphism between $\C{A}(\C{G}\bowtie\C{H})$ and the matched pair Lie algebroid $\C{A}\C{G}\bowtie \C{A}\C{H}$.

To begin with, let $\C{G}\bowtie\C{H}$ be a matched pair of Lie groupoids, $X \in \C{A}_b\C{G}$, and $Y\in \C{A}_b\C{H}$; that is, let there be curves $x_t \in \C{G}$ and $y_t\in \C{H}$ so that $\a(x_t) = b = \a(y_t)$, and that $x_0 = \ve_G(b) \in \C{G}$ with $y_0 = \ve_H(b) \in \C{H}$. In view of Proposition \ref{prop-grpd-embeddings}, the (groupoid) embeddings
\begin{eqnarray}
\C{G} &\lra & \C{G}\bowtie \C{H}, \qquad x_t\mapsto (x_t,(\ve_\C{H}\circ \b)(x_t))
\\
\C{H} &\lra & \C{G}\bowtie \C{H}, \qquad y_t\mapsto (\ve_\C{G}(b) ,y_t).
\end{eqnarray}
induce morphisms
\begin{eqnarray}
\C{A}_b\C{G} &\lra & \C{A}_b(\C{G}\bowtie \C{H}), \qquad X\mapsto (X,T(\ve_\C{H}\circ \b)(X))
\\
\C{A}_b\C{H} &\lra & \C{A}_b(\C{G}\bowtie \C{H}), \qquad Y\mapsto (\t_{\ve_\C{G}(b)} ,Y)
\end{eqnarray}
of Lie algebroids. Furthermore, we have the following proposition, see \cite[Prop. 5.1]{Mokr97}.

\begin{proposition} \label{prop-matched-algoid-isom}
Given two Lie groupoids $\C{G}$ and $\C{H}$ over the same base $B$, the map 
\begin{equation}\label{eqn-AG+AH-to-AGH}
\C{A}_b\C{G}\oplus \C{A}_b\C{H} \to \C{A}_b(\C{G}\bowtie \C{H}), \qquad (X,Y) \mapsto (X,T(\ve_\C{H}\circ \b)(X)+Y).
\end{equation}
is an isomorphism. 
\end{proposition}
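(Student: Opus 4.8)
The plan is to compute both sides of \eqref{eqn-AG+AH-to-AGH} explicitly as subspaces of tangent spaces and then to exhibit the inverse directly. By the construction of the associated Lie algebroid, $\C{A}_b(\C{G}\bowtie\C{H}) = \ker T_{\ve(b)}\a$ at the unit $\ve(b) = (\widetilde{b},\widetilde{b})$, where the source of the matched pair groupoid reads $\a(g,h) = \a_\C{G}(g)$. First I would describe the ambient tangent space: since $\C{G}\bowtie\C{H} = \C{G}\ast\C{H}$ is the fibre product cut out by the condition $\b_\C{G}(g) = \a_\C{H}(h)$, differentiating this constraint gives
\begin{equation*}
T_{(\widetilde{b},\widetilde{b})}(\C{G}\ast\C{H}) = \big\{(U,V)\in T_{\widetilde{b}}\C{G}\oplus T_{\widetilde{b}}\C{H} \mid T\b_\C{G}(U) = T\a_\C{H}(V)\big\}.
\end{equation*}
Imposing in addition the kernel condition $T\a_\C{G}(U)=0$ forces $U=X\in\C{A}_b\C{G}$, while the surviving constraint on the second slot becomes $T\a_\C{H}(V) = T\b_\C{G}(X) = a_\C{G}(X)$, using the very definition of the anchor. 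Hence
\begin{equation*}
\C{A}_b(\C{G}\bowtie\C{H}) = \big\{(X,V) \mid X\in\C{A}_b\C{G},\ V\in T_{\widetilde{b}}\C{H},\ T\a_\C{H}(V) = a_\C{G}(X)\big\}.
\end{equation*}

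Next I would verify that \eqref{eqn-AG+AH-to-AGH} indeed lands in this kernel, which is the only place the particular term $T(\ve_\C{H}\circ\b)(X)$ plays a role. Since $\a_\C{H}\circ\ve_\C{H} = \Id_B$, one computes $T\a_\C{H}\big(T(\ve_\C{H}\circ\b)(X)\big) = T\b_\C{G}(X) = a_\C{G}(X)$, so $T(\ve_\C{H}\circ\b)(X)$ is a distinguished vector satisfying the constraint, and adding any $Y\in\C{A}_b\C{H}\subseteq\ker T\a_\C{H}$ preserves it. The map is manifestly linear, being precisely the sum of the two Lie algebroid embeddings recorded just above the proposition (which are well defined by Proposition \ref{prop-grpd-embeddings} together with the functoriality of $\C{A}(-)$). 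Injectivity is immediate: if $(X,\, T(\ve_\C{H}\circ\b)(X)+Y) = 0$, then $X=0$, whence $T(\ve_\C{H}\circ\b)(X)=0$ and therefore $Y=0$.

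For surjectivity I would write down the inverse. Given $(X,V)\in\C{A}_b(\C{G}\bowtie\C{H})$ with $T\a_\C{H}(V)=a_\C{G}(X)$, set $Y := V - T(\ve_\C{H}\circ\b)(X)$; the computation above yields $T\a_\C{H}(Y)=a_\C{G}(X)-a_\C{G}(X)=0$, so $Y\in\C{A}_b\C{H}$ and $(X,Y)$ maps back to $(X,V)$. Alternatively, a dimension count closes the argument at once: writing $n=\dim B$, $r=\dim\C{A}_b\C{G}$, $s=\dim\C{A}_b\C{H}$, the fibre product $\C{G}\ast\C{H}$ has dimension $n+r+s$ (it is a Lie groupoid by the matched pair hypothesis, so its source is a submersion onto $B$), whence $\dim\C{A}_b(\C{G}\bowtie\C{H})=r+s=\dim(\C{A}_b\C{G}\oplus\C{A}_b\C{H})$, and an injective linear map between spaces of equal dimension is an isomorphism. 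The only genuinely delicate point is the bookkeeping of the transversality condition: one must correctly identify the tangent space of the fibre product and recognise that the anchor $a_\C{G}(X)=T\b_\C{G}(X)$ is exactly the datum that both $V$ and the canonical lift $T(\ve_\C{H}\circ\b)(X)$ are required to match. Once this is in place the remaining verifications are routine, and smoothness of all the maps in $b$ upgrades the fibrewise statement to an isomorphism of vector bundles.
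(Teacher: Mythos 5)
Your proof is correct, and it reaches the conclusion by a genuinely different route than the paper. The paper argues with curves and the groupoid multiplication: taking a constant-source curve $(x_t,z_t)$ in $\C{G}\bowtie\C{H}$ with $\dot{x}_0=X$ and $\dot{z}_0=Z$, it multiplies by judiciously chosen inverses, differentiates the resulting identities, and concludes that $Z-T(\ve_\C{H}\circ\b)(X)$ is the velocity of a constant-source curve in $\C{H}$, hence lies in $\C{A}_b\C{H}$; this exhibits $(X,Z)\mapsto\big(X,\,Z-T(\ve_\C{H}\circ\b)(X)\big)$ as the inverse. You instead linearize the fibre-product constraint: since $\C{G}\bowtie\C{H}=\C{G}\ast\C{H}$ is cut out by $\b_\C{G}=\a_\C{H}$, a fibre product along submersions, its tangent space at $(\widetilde{b},\widetilde{b})$ is $\{(U,V)\mid T\b_\C{G}(U)=T\a_\C{H}(V)\}$, and intersecting with $\ker T\a_\C{G}$ gives the explicit description $\C{A}_b(\C{G}\bowtie\C{H})=\{(X,V)\mid X\in\C{A}_b\C{G},\ T\a_\C{H}(V)=a_\C{G}(X)\}$; after that, well-definedness, injectivity and surjectivity are one-line linear algebra, and your inverse is the very formula the paper derives. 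Your route is shorter and makes the role of the correction term transparent: $T(\ve_\C{H}\circ\b)(X)$ is the canonical solution of the linearized constraint, to which elements of $\ker T\a_\C{H}=\C{A}_b\C{H}$ may be freely added. Notably, you never invoke the groupoid multiplication at all, only $\a_\C{H}\circ\ve_\C{H}=\Id_B$ and the form of the source map of $\C{G}\ast\C{H}$, and your dimension count is a legitimate alternative finish. What the paper's computation buys in exchange is that it stays entirely within the formalism of constant-source curves used throughout the article, which is exactly the language in which the isomorphism is then applied to compute the left and right invariant vector fields on $\C{G}\bowtie\C{H}$; your argument instead rests on the standard fact about tangent spaces of fibre products of submersions, which the paper never states explicitly. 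One cosmetic remark: the dimension $n+r+s$ of $\C{G}\ast\C{H}$ already follows from $\b_\C{G}$ being a submersion, so your parenthetical appeal to the matched pair hypothesis there is not needed (though it is of course needed for $\C{G}\ast\C{H}$ to carry the groupoid structure the proposition refers to).
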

\begin{proof}
Let $(x_t,z_t) \in \C{G}\bowtie\C{H}$ be a curve such that $\a(x_t)=b \in B$, and that $\b(x_t) = \a(z_t)$. Let also $\dot{x}_0 = X \in \C{A}_b\C{G}$, and $\dot{z}_0 = Z \in T_{\ve_\C{H}(b)}\C{H}$. Multiplying both sides of
\[
((\ve_\C{G}\circ\b)(z_t),z_t^{-1})(x_t^{-1},\ve_\C{H}(b))(x_t,z_t) = ((\ve_\C{G}\circ\b)(z_t),(\ve_\C{H}\circ\b)(z_t)).
\]
by $((\ve_\C{G}\circ\a)(z_t),z_t) \in \C{G}\bowtie\C{H}$, from the left, we obtain
\[
(x_t^{-1},\ve_\C{H}(b))(x_t,z_t) = ((\ve_\C{G}\circ\a)(z_t),z_t).
\]
Differentiating the latter equality, while keeping the product rule in mind, we arrive at
\[
\left(\dt\,(x_t^{-1},\ve_\C{H}(b))\right) + \left(\dt\,(x_t,z_t)\right) = \left(\dt\,((\ve_\C{G}\circ\a)(z_t),z_t)\right),
\]
that is,
\begin{equation}\label{eqn-x-z}
\left(\dt\,(x_t,z_t)\right) = \left(\dt\,((\ve_\C{G}\circ\a)(z_t),z_t)\right) - \left(\dt\,(x_t^{-1},\ve_\C{H}(b))\right).
\end{equation}
On the other hand, differentiating
\[
(x_t^{-1},\ve_\C{H}(b))(x_t,(\ve_\C{H}\circ \b)(x_t)) = ((\ve_\C{G}\circ \b)(x_t),(\ve_\C{H}\circ \b)(x_t)),
\]
we obtain
\[
\left(\dt\,(x_t^{-1},\ve_\C{H}(b))\right) + \left(\dt\,(x_t,(\ve_\C{H}\circ \b)(x_t))\right) = \left(\dt\,((\ve_\C{G}\circ \b)(x_t),(\ve_\C{H}\circ \b)(x_t))\right),
\]
that is,
\begin{equation}\label{eqn-x-inv-x}
\left(\dt\,(x_t^{-1},\ve_\C{H}(b))\right)= \left(\dt\,((\ve_\C{G}\circ \b)(x_t),(\ve_\C{H}\circ \b)(x_t))\right) - \left(\dt\,(x_t,(\ve_\C{H}\circ \b)(x_t))\right) .
\end{equation}
Now, \eqref{eqn-x-z} and \eqref{eqn-x-inv-x} together imply
\begin{align*}
& \left(\dt\,(x_t,z_t)\right) = \\
& \left(\dt\,(x_t,(\ve_\C{H}\circ \b)(x_t))\right) +\left(\dt\,((\ve_\C{G}\circ\a)(z_t),z_t)\right) - \left(\dt\,((\ve_\C{G}\circ \b)(x_t),(\ve_\C{H}\circ \b)(x_t))\right) = \\
& \left(\dt\,(x_t,(\ve_\C{H}\circ \b)(x_t))\right) +\left(\dt\,((\ve_\C{G}\circ \b)(x_t),z_t)\right) - \left(\dt\,((\ve_\C{G}\circ \b)(x_t),(\ve_\C{H}\circ \b)(x_t))\right),
\end{align*}
where, on the second equality we used the fact that $\b(x_t) =\a(z_t)$. Hence, we see at once that
\begin{align*}
& \left(\dt\,(x_t,z_t)\right) - \left(\dt\,(x_t,(\ve_\C{H}\circ \b)(x_t))\right) = \\
& \left(\dt\,((\ve_\C{G}\circ \b)(x_t),z_t)\right) - \left(\dt\,((\ve_\C{G}\circ \b)(x_t),(\ve_\C{H}\circ \b)(x_t))\right) = \\
& (\t_{\ve_\C{G}(b)}, Z - T(\ve_\C{H}\circ \b)(X)) \in \C{A}_b(\C{G}\bowtie \C{H}).
\end{align*}
As a result, $Z - T(\ve_H\circ \b)(X) \in T_{\ve_\C{H}(b)}H$ is the derivative, at $t=0$, of a curve $y_t \in \C{H}$ with $\a(y_t) = b$; that is, $Z - T(\ve_H\circ \b)(X) \in \C{A}_b\C{H}$. In other words, the mapping
\[
\C{A}(\C{G}\bowtie \C{H}) \lra \C{A}\C{G} \oplus \C{A}\C{H}, \qquad (X,Z) \mapsto (X,Z - T(\ve_\C{H}\circ \b)(X))
\]
is well-defined, and makes the inverse of \eqref{eqn-AG+AH-to-AGH}.
\end{proof}

The next example illustrates the isomorphism\eqref{eqn-AG+AH-to-AGH} for the matched pair $(M\times G,M\times M)$ of the action Lie groupoid of Example \ref{ex-action-groupoid} and the coarse groupoid $M\times M$ of Example \ref{ex-coarse-groupoid}.

\begin{example}
Let $(\t_m,\xi;\xi^\dagger(m),Y) \in \C{A}_m((M\times G)\bowtie (M\times M))$ be a generic element obtained by the differentiation at $t=0$ of a curve $(m,g_t;mg_t,n_t) \in (M\times G)\bowtie (M\times M)$ of constant source, with $g_0=e\in G$, $n_0=m$, $\dot{g}_0 = \xi \in \G{g}$ and $\dot{n}_0=Y\in T_mM$. Similarly, the inclusion
\[
M\times G \ni (m,g_t)\mapsto (m,g_t;(\ve_{M\times M}\circ\b)(m,g_t)) = (m,g_t;mg_t,mg_t))  \in (M\times G)\bowtie (M\times M)
\]
yields
\[
\C{A}_m(M\times G)\ni (\t_m,\xi) \mapsto (\t_m,\xi;\xi^\dagger(m),\xi^\dagger(m))\in\C{A}_m((M\times G)\bowtie (M\times M)),
\]
while 
\[
M\times M \ni (m,n_t)\mapsto (m,e;m,n_t)) \in (M\times G)\bowtie (M\times M)
\]
leads to
\[
\C{A}_m(M\times M)\ni (\t_m,X) \mapsto (\t_m,\t;\t_m,X)\in\C{A}_m((M\times G)\bowtie (M\times M)).
\]
As a result, we see that \eqref{eqn-AG+AH-to-AGH} takes the form
\begin{align}\label{eqn-iso-AMG-AMM-to-AMGMM}
\begin{split}
& \C{A}_m(M\times G) \oplus \C{A}_m(M\times M)\ni(\t_m,\xi)\oplus (\t_m,X) \mapsto \\
&\hspace{3cm} (\t_m,\xi;\xi^\dagger(m),X+\xi^\dagger(m)) \in \C{A}_m((M\times G)\bowtie (M\times M)).
\end{split}
\end{align}
\end{example}

\subsection{Left and right invariant vector fields on matched pairs of Lie groupoids}
~

\noindent
In this subsection we shall determine the nature of the left invariant (resp. the right invariant) vector fields on a matched pair Lie groupoid. 

\begin{proposition}
Let $\C{G}\bowtie \C{H}$ be a matched pair Lie groupoid over a base manifold $B$. Then, the left invariant vector field corresponding to $U \in \C{A}_b(\C{G}\bowtie \C{H})$ is given by
\begin{equation}\label{eqn-matched-LIVF}
\overleftarrow{U}(g,h) = (\overleftarrow{h\rt X}(g), X^\dagger(h)+\overleftarrow{Y}(h)),
\end{equation}
where $g\in \C{G}$, $h\in \C{H}$, so that $\b(h) = b$, and $X \in \C{A}_b\C{G}$, $Y \in \C{A}_b\C{H}$.
\end{proposition}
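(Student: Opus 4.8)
The plan is to evaluate the left invariant vector field through its curve description together with the explicit partial multiplication \eqref{mpmul}, and then to treat the two resulting components separately. I would fix $(g,h)\in\C{G}\bowtie\C{H}$ with $\b(h)=b$, choose a curve $(x_t,z_t)\in\C{G}\bowtie\C{H}$ of constant source $\a(x_t)=b$, based at $(x_0,z_0)=(\widetilde b,\widetilde b)$ and tangent to $U$, and write, using \eqref{left-inv-vect-fields} and \eqref{mpmul},
\[
\overleftarrow U(g,h)=\dt\,(g,h)(x_t,z_t)=\dt\,\big(g(h\rt x_t),\,(h\lt x_t)z_t\big).
\]
The composabilities needed here, namely $\b(x_t)=\a(z_t)$ and $\b(h)=\a(x_t)=b$, are exactly the defining conditions of the curve and of the pair $(g,h)$. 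Setting $X:=\dot x_0\in\C{A}_b\C{G}$, Proposition \ref{prop-matched-algoid-isom} identifies $U$ with a pair $(X,Y)$, $Y\in\C{A}_b\C{H}$, through \eqref{eqn-AG+AH-to-AGH}, so that $\dot z_0=T(\ve_\C{H}\circ\b)(X)+Y$.

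For the first component I would observe that $h\rt x_t$ is a curve in $\C{G}$ of constant source $\a(h\rt x_t)=\a(h)=\b(g)$, based at $h\rt\widetilde b=\widetilde{\a(h)}$ by \eqref{act} and tangent to $h\rt X\in\C{A}_{\a(h)}\C{G}$ by \eqref{honX}. Hence left translating this curve by $g$ and differentiating reproduces, by the very definition \eqref{left-inv-vect-fields} of a left invariant vector field on $\C{G}$, the value $\overleftarrow{h\rt X}(g)$.

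The second component is the substantive step. Writing the product as $m(h\lt x_t,z_t)$ for the multiplication $m$ of $\C{H}$ and applying the chain rule gives $T_{(h,\widetilde b)}m\,(X^\dagger(h),\dot z_0)$, where $X^\dagger(h)=\dt\,(h\lt x_t)$ by \eqref{Xonh}; one checks that $(X^\dagger(h),\dot z_0)$ is a composable tangent pair, since both $T\b(X^\dagger(h))$ and $T\a(\dot z_0)$ equal the anchor $a(X)$. I would then split $\dot z_0=T(\ve_\C{H}\circ\b)(X)+Y$ and exploit the linearity of $T_{(h,\widetilde b)}m$: on the vector $(X^\dagger(h),T(\ve_\C{H}\circ\b)(X))$ the multiplication is computed from the product curve $(h\lt x_t)\widetilde{\b(x_t)}$, which equals $h\lt x_t$ because $\widetilde{\b(x_t)}=\widetilde{\b(h\lt x_t)}$ is the right unit of $h\lt x_t$, so its derivative is $X^\dagger(h)$ again; on the complementary vector $(0_h,Y)$ the multiplication is $\dt\,h\,y_t=\overleftarrow Y(h)$ for a curve $y_t$ of constant source tangent to $Y$. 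Adding the two gives $X^\dagger(h)+\overleftarrow Y(h)$, the claimed second component.

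The main obstacle is precisely this second component: the partial nature of the multiplication forbids a naive product rule, because the relevant unit $\widetilde{\b(x_t)}$ moves with $t$ while $X^\dagger(h)$ is not tangent to a $\b$-fibre. The resolution is to pass to the tangent groupoid $T\C{H}\rightrightarrows TB$, where the splitting $\dot z_0=T(\ve_\C{H}\circ\b)(X)+Y$ furnished by Proposition \ref{prop-matched-algoid-isom} becomes an honest decomposition of composable tangent vectors, and to recognise the moving unit $\widetilde{\b(x_t)}$ as the right identity of $h\lt x_t$. Once this is in place, the remaining verifications, namely the composabilities, the anchor matching, and the base points of the auxiliary curves, are routine.
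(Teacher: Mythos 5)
Your proof is correct and takes essentially the same route as the paper's: both rest on the decomposition $U=(X,T(\ve_\C{H}\circ\b)(X))+(\t_{\ve_\C{G}(b)},Y)$ of Proposition \ref{prop-matched-algoid-isom}, evaluation of the partial multiplication \eqref{mpmul} along curves, the unit-absorption identity $(h\lt x_t)\,\widetilde{\b(x_t)}=h\lt x_t$, and the computation $\dt\,hy_t=\overleftarrow{Y}(h)$. The only difference is organizational---the paper splits $U$ \emph{before} forming the left invariant vector field and runs two separate curves through the embedded copies of $\C{G}$ and $\C{H}$, whereas you keep a single curve and split \emph{after} multiplying, using linearity of the tangent multiplication of $\C{H}$---but the resulting computations are identical.
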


\begin{proof}
In view of the isomorphism \ref{prop-matched-algoid-isom}, any $U \in A_b(\C{G}\bowtie \C{H})$ may be written as
\[
U = (X,T(\ve\circ \b)(X)) + (\t_{\ve_G(b)},Y)
\]
for some $X \in A_b\C{G}$, and $Y \in A_b\C{H}$. As such,
\[
\overleftarrow{U}(g,h) = \overleftarrow{(X,T(\ve\circ \b)(X))} (g,h)+ \overleftarrow{(\t_{\ve_G(b)},Y)}(g,h).
\]
More precisely, assuming $\a(x_t) = \b(h)$, we have
\begin{align*}
& \overleftarrow{(X,T(\ve\circ \b)(X))}(g,h) = \dt\,(g,h)(x_t,(\ve_H\circ \b)(x_t)) = \\
& \dt\, (g(h\rt x_t), (h\lt x_t)(\ve_H\circ \b)(x_t)) = \\
& \dt\, (g(h\rt x_t), (h\lt x_t)) = (\overleftarrow{h\rt X}(g), h\lt X) = (\overleftarrow{h\rt X}(g), X^\dagger(h)).
\end{align*}
Similarly, assuming $\b(y_t) = \a(g)$ and $\b(h) = b = \a(y_t)$, we have
\begin{align*}
& \overleftarrow{(\t_{\ve_G(b)},Y)}(g,h) = \dt\, (g,h)(\ve_G(b),y_t) =\dt\, (g(h\rt \ve_G(b)),(h\lt \ve_G(b))y_t) = \\
& \dt\, (g\ve_G(\a(h)), hy_t) = (\t_g, \overleftarrow{Y}(h)).
\end{align*}
The result follows.
\end{proof}

The right analogue is given by the following proposition.

\begin{proposition}
Let $\C{G}\bowtie \C{H}$ be a matched pair of Lie groupoids over a base manifold $B$. Then, the right invariant vector field corresponding to $U \in A_b(\C{G}\bowtie \C{H})$ is given by
\begin{equation}\label{eqn-matched-RIVF}
\overrightarrow{U}(g,h) = (\overrightarrow{X}(g)-Y^\dagger(g),\overrightarrow{Y\lt g}(h)),
\end{equation}
where $g\in \C{G}$, so that $\a(g) = b$, $h\in \C{H}$, $X \in \C{A}_b\C{G}$, and $Y \in \C{A}_b\C{H}$.
\end{proposition}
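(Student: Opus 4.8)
The plan is to mirror the computation carried out for the left invariant case, using the decomposition furnished by Proposition \ref{prop-matched-algoid-isom} together with the formula \eqref{right-inv-vect-fields-curve} expressing a right invariant vector field as the derivative of $-\dt\,u^{-1}(t)\gamma$ along a curve $u(t)$ of constant source tangent to the given algebroid element. First I would write a generic $U \in \C{A}_b(\C{G}\bowtie\C{H})$ as $U = (X,T(\ve\circ\b)(X)) + (\t_{\ve_G(b)},Y)$ for $X \in \C{A}_b\C{G}$ and $Y \in \C{A}_b\C{H}$, exactly as in the preceding proof, so that $\overrightarrow{U}(g,h)$ splits as the sum of the right invariant fields attached to these two summands; by linearity I would treat each summand separately.

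For the first summand I would choose a curve $x_t \in \C{G}$ of constant source with $\dot{x}_0 = X$ and $\a(x_t)=b$, lift it to the curve $(x_t,(\ve_\C{H}\circ\b)(x_t)) \in \C{G}\bowtie\C{H}$, invert it in the matched groupoid, and multiply on the right by $(g,h)$. Using the inversion in $\C{G}\bowtie\C{H}$ (whose form follows from \eqref{mpmul}) and the partial multiplication \eqref{mpmul}, I would simplify the product $(x_t,(\ve_\C{H}\circ\b)(x_t))^{-1}(g,h)$ and then differentiate at $t=0$, invoking \eqref{right-inv-vect-fields-curve} for $\C{G}$ to recover $\overrightarrow{X}(g)$ in the first slot and the infinitesimal left action \eqref{Yong}, i.e. the generator $Y^\dagger$, as the correction term $-Y^\dagger(g)$. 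For the second summand I would take a curve $y_t \in \C{H}$ of constant source with $\dot{y}_0 = Y$, lift it to $(\ve_\C{G}(b),y_t)$, invert, and multiply by $(g,h)$; here the second slot should produce $\overrightarrow{Y\lt g}(h)$ after recognising \eqref{gonY}, while the first slot collapses to $\t_g$ because the $\C{G}$-component of the lifted curve is an identity. Summing the two contributions yields \eqref{eqn-matched-RIVF}.

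The hard part, as in the left invariant case, is the bookkeeping of sources and targets so that every product written down actually lies in the space of composable pairs, and matching the emergent derivatives against the correct infinitesimal actions. The crucial subtlety particular to the right invariant case is the appearance of the inverse curves inside \eqref{Yong} and \eqref{gonY}: the definitions \eqref{Yong} and \eqref{gonY} of $Y^\dagger(g)$ and $Y\lt g$ are themselves phrased in terms of $y_t^{-1}$, so the minus signs and inverses coming from the groupoid inversion in \eqref{right-inv-vect-fields-curve} must be tracked carefully to land precisely on $-Y^\dagger(g)$ and on $\overrightarrow{Y\lt g}(h)$ rather than on their opposites. I would therefore spend most of the care verifying that, after inverting the lifted curve $(\ve_\C{G}(b),y_t)$ in $\C{G}\bowtie\C{H}$ and multiplying by $(g,h)$, the $\C{H}$-component is exactly $(y_t^{-1}\lt g)^{-1}$ evaluated correctly, so that its derivative is the algebroid element $Y\lt g$ of \eqref{gonY} and its right invariant field is $\overrightarrow{Y\lt g}(h)$. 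Everything else is a routine symmetric transcription of the left invariant computation.
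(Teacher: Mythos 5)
Your overall strategy is exactly the paper's: decompose $U = (X,T(\ve\circ\b)(X)) + (\t_{\ve_G(b)},Y)$ via Proposition \ref{prop-matched-algoid-isom}, then compute the right invariant field of each summand by inverting a lifted curve and multiplying by $(g,h)$ as in \eqref{right-inv-vect-fields-curve}. However, your bookkeeping of which summand produces which term is wrong in both cases, and this is not cosmetic. The summand $(X,T(\ve\circ\b)(X))$ involves only $X$, so it cannot possibly produce the correction $-Y^\dagger(g)$ that you attribute to it: carrying out the computation, the inversion gives $(x_t,(\ve_H\circ\b)(x_t))^{-1} = (x_t^{-1},\ve_H(b))$, and then $(x_t^{-1},\ve_H(b))(g,h) = (x_t^{-1}g,h)$ by \eqref{act} and \eqref{act2}, whose negative derivative is $(\overrightarrow{X}(g),\t_h)$ --- no cross term at all.

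Conversely, the first slot of the $Y$-summand does \emph{not} collapse to $\t_g$. Your stated reason --- that the $\C{G}$-component of the lifted curve $(\ve_G(b),y_t)$ is an identity --- overlooks that the matched-pair inversion and multiplication mix the two legs: one has $(\ve_G(b),y_t)^{-1} = ((\ve_G\circ\b)(y_t),\,y_t^{-1})$, and multiplying by $(g,h)$ places $(\ve_G\circ\b)(y_t)\,(y_t^{-1}\rt g) = y_t^{-1}\rt g$ in the first slot, a genuinely $t$-dependent curve whose negative derivative is exactly $-Y^\dagger(g)$ by \eqref{Yong}. This is the heart of the proposition: the interaction term in \eqref{eqn-matched-RIVF} is generated by the $\C{H}$-leg acting on $g$ through $\rt$. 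A reader executing your plan would find both intermediate predictions contradicted and would have to redo the analysis to see where the cross term actually originates. (Your treatment of the second slot is essentially right: the $\C{H}$-component of the product is $(y_t^{-1}\lt g)h$, i.e.\ the inverse of the curve tangent to $Y\lt g$ from \eqref{gonY} multiplied by $h$, so its negative derivative is $\overrightarrow{Y\lt g}(h)$; though note the component itself is $(y_t^{-1}\lt g)h$, not $(y_t^{-1}\lt g)^{-1}$.)
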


\begin{proof}
This time we have for any $U \in A_b(\C{G}\bowtie \C{H})$ that
\[
\overrightarrow{U} (g,h)= \overrightarrow{(X,T(\ve\circ \b)(X))} (g,h)+ \overrightarrow{(\t_{\ve_G(b)},Y)}(g,h).
\]
Accordingly, assuming $\a(x_t) = b =\a(g)$,
\begin{align*}
& \overrightarrow{(X,T(\ve\circ \b)(X))}(g,h) = -\dt\, (x_t,(\ve_H\circ \b)(x_t))^{-1}(g,h) = \\
& -\dt\, ((\ve_H\circ \b)(x_t)^{-1}\rt x_t^{-1},(\ve_H\circ \b)(x_t)^{-1}\lt x_t^{-1})(g,h) = \\
& -\dt\, (x_t^{-1},\ve_H(b))(g,h) =-\dt\, (x_t^{-1}(\ve_H(b) \rt g),(\ve_H(b) \lt g)h) = \\
& -\dt\, (x_t^{-1}g,h) = (\overrightarrow{X}(g),\t_h),
\end{align*}
and
\begin{align*}
& \overrightarrow{(\t_{\ve_G(b)},Y)}(g,h) = -\dt\, (\ve_G(b),y_t)^{-1}(g,h) = \\
& -\dt\, (y_t^{-1}\rt\ve_G(b)^{-1},y_t^{-1}\lt\ve_G(b)^{-1})(g,h) = \\
& -\dt\, ((\ve_G\circ \b)(y_t),y_t^{-1})(g,h) = \\
& -\dt\, ((\ve_G\circ \b)(y_t)(y_t^{-1} \rt g),(y_t^{-1} \lt g)h) = \\
& -\dt\, (y_t^{-1} \rt g,(y_t^{-1} \lt g)h) = (-Y^\dagger(g),\overrightarrow{Y\lt g}(h)).
\end{align*}
The result follows.
\end{proof}

\begin{example}\label{ex-left-inv-right-inv-vf-MG-MM}
Let us now derive the left (resp. right) invariant vector fields on $(M\times G) \bowtie (M\times M)$ in view of the isomorphism \eqref{eqn-iso-AMG-AMM-to-AMGMM}.

Given $(m,g;mg,n) \in (M\times G) \bowtie (M\times M)$ and $(\t_n,\xi;\xi^\dagger(n),Y) \in A_n((M\times G) \bowtie (M\times M))$, we have
\begin{align*}
& \overleftarrow{(\t_n,\xi;\xi^\dagger(n),Y)}(m,g;mg,n)) = \\
& \dt\,(m,g;mg,n)(n,e_t;ne_t,n_t) = \\
& \dt\, \Big((m,g)\big( (mg,n)\rt (n,e_t)\big);\big( (mg,n)\lt (n,e_t)\big)(ne_t,n_t)\Big) = \\
& \dt\, \Big((m,g)(mg,e_t);(mge_t,ne_t)(ne_t,n_t)\Big)  =  \dt\, \Big(m,ge_t;mge_t,n_t\Big)  = \\
&  (\t_m,\overleftarrow{\xi}(g); \xi^\dagger(mg),Y).
\end{align*}
Similarly, for $(m,g;mg,n) \in (M\times G) \bowtie (M\times M)$ and $(\t_m,\xi;\xi^\dagger(m),Y) \in A_m((M\times G) \bowtie (M\times M))$,
\begin{align*}
& \overrightarrow{(\t_m,\xi;\xi^\dagger(m),Y) }(m,g;mg,n) = \\
& -\dt\,(m,e_t;me_t,n_t)^{-1}(m,g;mg,n) = \\
& - \dt\, \Big((me_t,n_t)^{-1}\rt (m,e_t)^{-1};(me_t,n_t)^{-1}\lt (m,e_t)^{-1}\Big) (m,g;mg,n)= \\
& - \dt\, \Big((n_t,me_t)\rt (me_t,e_t^{-1});(n_t,me_t)\lt (me_t,e_t^{-1})\Big)(m,g;mg,n) = \\
& - \dt\, \Big(n_t,e_t^{-1};n_te_t^{-1},m\Big) (m,g;mg,n)= \\
& - \dt\, \Big((n_t,e_t^{-1})\big((n_te_t^{-1},m)\rt (m,g)\big);\big((n_te_t^{-1},m)\lt (m,g)\big)(mg,n)\Big)= \\
& - \dt\, \Big((n_t,e_t^{-1})(n_te_t^{-1},g);(n_te_t^{-1}g,mg)(mg,n)\Big)= \\
&- \dt\, \Big(n_t,e_t^{-1}g;n_te_t^{-1}g,n\Big)= (-Y,\overrightarrow{\xi}(g);(Y-\xi^\dagger(m))\lt g,\t_n).
\end{align*}
On the other hand, in view of \eqref{eqn-iso-AMG-AMM-to-AMGMM}, for any $(m,g;mg,n) \in (M\times G)\bowtie (M\times M)$, and
\[
(\t_n,\xi;\xi^\dagger(n),Y) = (\t_n,\xi;\xi^\dagger(n),\xi^\dagger(n)) + (\t_n,\t;\t_n,X) \in \C{A}_n((M\times G)\bowtie (M\times M)),
\]
we have
\begin{align*}
& \overleftarrow{(\t_n,\xi;\xi^\dagger(n),Y)} (m,g;mg,n)= \\
& \overleftarrow{(\t_n,\xi;\xi^\dagger(n),\xi^\dagger(n))}(m,g;mg,n) + \overleftarrow{(\t_n,\t;\t_n,X)} (m,g;mg,n)= \\
& (\t_m,\overleftarrow{\xi}(g);\xi^\dagger(mg),\xi^\dagger(n)) + (\t_m,\t_g;\t_{mg},X) = \\
& (\t_m,\overleftarrow{\xi}(g);\xi^\dagger(mg),\xi^\dagger(n)+X),
\end{align*}
and for any 
\[
(\t_m,\xi;\xi^\dagger(m),Y) = (\t_m,\xi;\xi^\dagger(m),\xi^\dagger(m)) + (\t_m,\t;\t_m,Z) \in \C{A}_m((M\times G)\bowtie (M\times M)),
\]
we obtain
\begin{align*}
& \overrightarrow{(\t_m,\xi;\xi^\dagger(m),Y)}(m,g;mg,n) = \\
& \overrightarrow{(\t_m,\xi;\xi^\dagger(m),\xi^\dagger(m))} (m,g;mg,n)+ \overrightarrow{(\t_m,\t;\t_m,Z)} (m,g;mg,n)=\\
& (-\xi^\dagger(m),\overrightarrow{\xi}(g);\t_{mg},\t_n) + (-Z,\t_g;-Z\lt g,\t_n) = \\
& (-\xi^\dagger(m)-Z,\overrightarrow{\xi}(g);-Z\lt g,\t_n).
\end{align*}
\end{example}

\begin{remark}
We can relate the above calculations to the left (resp. right) invariant vector fields on the trivial groupoids as follows. The (groupoid) isomorphism \eqref{map-Phi} induces the isomorphism \eqref{map-APhi} on the level of Lie algebroids. Hence, it induces an isomorphism on the level of left (resp. right) invariant vector fields. Indeed,
\begin{align*}
& \overleftarrow{(\t_n,\xi;\xi^\dagger(n),Y)} (m,g;mg,n)= \overleftarrow{\C{A}_n\Phi(\t_n,\xi,Y)} (m,g;mg,n) \\
& \dt\,(m,g;mg,n)(n,e_t;ne_t,n_t) = \\
& \dt\,\Phi(m,g,n)\Phi(n,e_t,n_t) = \dt\, \Phi\big((m,g,n)(n,e_t,n_t)\big) = \\
& T_{(m,g,n)}\Phi\left(\overleftarrow{(\t_n,\xi,Y)}(m,g,n)\right).
\end{align*}
Similarly,
\begin{align*}
& \overrightarrow{(\t_m,\xi;\xi^\dagger(m),Y)} (m,g;mg,n)= \overrightarrow{\C{A}_m\Phi(\t_m,\xi,Y)} (m,g;mg,n) \\
& -\dt\,(n,e_t;ne_t,n_t)^{-1}(m,g;mg,n) = \\
& \dt\,\Phi\big((n,e_t,n_t) ^{-1}\big)\Phi(m,g,n)= \dt\, \Phi\big((n,e_t,n_t) ^{-1}(m,g,n)\big) = \\
& T_{(m,g,n)}\Phi\left(\overrightarrow{(\t_m,\xi,Y)}(m,g,n)\right).
\end{align*}
\end{remark}

\section{Discrete dynamics on matched pairs}\label{sect-disc-dynm-matched-pair}

\subsection{Discrete Euler-Lagrange equations}\label{subsect-disc-Euler-Lagrange-eqns}~

We shall recall briefly the discrete Euler-Lagrange equations from \cite[Subsect. 4.1]{MarrDiegMart06}. 
Let $\mathcal{G}$ be a Lie groupoid, and $\C{A}\mathcal{G}$ be its associated Lie algebroid. For any fixed $g\in \mathcal{G}$ and $N\geq 1$, $\mathcal{G}^{N}$ being the $N$ times cartesian product of $\mathcal{G}$, the set
\[
\mathcal{C}_{g}^{N}=\left\{ \left( g_{1},...,g_{N}\right) \in \mathcal{G}^{N} \mid \left(
g_{k},g_{k+1}\right) \in \mathcal{G}\ast \C{G}, \,1\leq k \leq N-1,\, g_{1}...g_{N}=g\right\} 
\]
is called the set of admissible sequences with values in $\C{G}$.

On the other hand, a discrete Lagrangian is defined as a function $L:\C{G} \to \B{R}$, and the discrete action
sum associated to it is given by
\begin{equation}\label{disct-action-sum}
\C{S}L:\mathcal{C}_{g}^{N}\rightarrow 
\mathbb{R}, \qquad \left( g_{1},...,g_{N}\right) \mapsto \sum_{k=1}^{N}L\left(
g_{k}\right) .
\end{equation}
Now the discrete Hamilton's principle may be recalled, from \cite{Wein96}, as follows. Given $g\in \mathcal{G}$ and $N\geq 1$, an admissible sequence 
$\left( g_{1},...,g_{N}\right) $ is a solution of the Lagrangian system
if and only if $\left( g_{1},...,g_{N}\right) \in \C{C}_g^N$ is a critical
point of \eqref{disct-action-sum}. So, along the lines of \cite{MarrDiegMart06}, one arrives at the discrete Euler-Lagrange equations 
\[
\sum_{k=1}^{N-1}[\overleftarrow{X}_{k}\left( g_{k}\right) (L)-
\overrightarrow{X}_{k}\left( g_{k+1}\right)
(L)]=0,
\]%
for any $X_{k} \in \Gamma(\C{AG})$. In particular, for $N=2$, the discrete Euler-Lagrange equations are given by
\begin{equation}\label{eqn-disct-EL}
\overleftarrow{X}\left( g_{1}\right) (L)-\overrightarrow{X}\left(
g_{2}\right) (L)=0
\end{equation}
for every section $X \in \Gamma(A\mathcal{G})$. 

We review below the examples discussed in \cite{MarrDiegMart06}. 

\begin{example}
Let $M\times M$ be the coarse (pair) groupoid of Example \ref{ex-coarse-groupoid}, whose left invariant vector fields (resp. the right invariant vector fields) correponding to the Lie algebroid of $M\times M$ were obtained in Example \ref{ex-pair-grpd-left-right-inv-v-f}.

Now, given a discrete Lagrangian $L:M\times M\to \B{R}$, the discrete Euler-Lagrange equations \eqref{eqn-disct-EL} takes the particular form 
\begin{eqnarray} \label{D-EL-1}
\overleftarrow{X}(x,y)(L)-\overrightarrow{X}(x,y)(L)=0.
\end{eqnarray}
In terms of the total derivatives on the product manifold $M\times M$, the discrete Euler-Lagrange equations may be rewritten as
\begin{eqnarray}\label{D-EL-2}
D_{2}L(x,y)+D_{1}L(y,z)=0,
\end{eqnarray}
see also \cite{MaWe01}.
\end{example}

\begin{example}
Let $G$ be a Lie group (with the Lie algebra $\G{g}$), considered as a groupoid over the identity $\{e\}$. We recall that the left invariant and the right invariant vector fields corresponding to the Lie algebroid $\G{g}$ of $G$ coincides with the left invariant and the right invariant vector fields on $G$. 
 
Now, given a discrete Lagrangian density $L:G\to \B{R}$ the discrete Euler-Lagrange equations are given by
\begin{equation}\label{discrete-EL-on-Lie-group}
\overleftarrow{\xi}(g_k) (L)-\overrightarrow{\xi}(g_{k+1}) (L)=0,
\end{equation}
or equivalently
\begin{align*}
& \langle dL(g_k), \overleftarrow{\xi} (g_k)\rangle  - \langle dL(g_{k+1}), \overrightarrow{\xi} (g_{k+1})\rangle = \\
& \langle dL(g_k),  T_e\ell_{g_k}(\xi) \rangle - \langle dL(g_{k+1}), T_er_{g_{k+1}}(\xi)\rangle  = \\
& \langle T^\ast_e\ell_{g_k}(dL(g_k)) - T^\ast_e r_{g_{k+1}}(dL(g_{k+1})),  \xi\rangle  = 0,
\end{align*}
for any $\xi\in\G{g}$, and any $g_k,g_{k+1} \in G$. As such, the discrete Euler-Lagrange equations may be written by
\begin{equation}\label{eqn-discrete-EL}
T^\ast_e\ell_{g_k}(dL(g_k)) - T^\ast_e r_{g_{k+1}}(dL(g_{k+1})) = 0
\end{equation}

Following \cite{MarrDiegMart06}, we set 
\[
\mu_{k}:=\left(r_{g_{k}}^{\ast }dL\right) \left( e\right).
\]
Then \eqref{eqn-disct-EL} takes the form
\begin{align*}
& T^\ast_e\ell_{g_k}(dL(g_k)) - T^\ast_e r_{g_{k+1}}(dL(g_{k+1})) =  T^\ast_e\ell_{g_k}(dL(g_k)) - \mu_{k+1} = \\
& T^\ast_e\ell_{g_k}T^\ast_e (r_{g_k}\circ r_{{g_k}^{-1}})(dL(g_k)) - \mu_{k+1} =  T^\ast_e\ell_{g_k}T^\ast_e r_{{g_k}^{-1}} T^\ast_er_{g_k} (dL(g_k)) - \mu_{k+1}  = \\
& T^\ast_e\ell_{g_k}T^\ast_e r_{{g_k}^{-1}} (\mu_k) - \mu_{k+1}  =  \Ad^\ast_{g_k^{-1}}(\mu_k) - \mu_{k+1} = 0.
\end{align*}
In other words,
\[
\mu _{k+1}=\Ad_{g^{-1}_{k}}^{\ast }(\mu _{k}),
\]
called the discrete Euler-Lagrange equations, see also \cite{BobeSuri99,MaPeSh99,MaPeSh2000}.
\end{example}

\begin{remark}
We note for the adjoint action $\Ad:G\times \G{g} \to \G{g}$, $(g,\xi) \mapsto \Ad_g(\xi) =: g\rt \xi$ that
\[
\langle \mu, \Ad_g(\xi) \rangle = \langle \mu, g \rt \xi \rangle = \langle \mu \overset{\ast}{\lt} g, \xi\rangle = \langle g^{-1} \overset{\ast}{\rt}\mu, \xi\rangle = \langle \Ad^\ast_{g^{-1}}(\mu), \xi\rangle.
\]
\end{remark}

\begin{example}
Let $M\times G$ be the action Lie groupoid of Example \ref{ex-action-groupoid}, with the left invariant and the right invariant vector fields as in Example \ref{ex-action-grpd-left-right-inv-v-f}.

Given a Lagrangian $L: M\times G\to \B{R}$, the sequence
$((m,g_{k}),(m g_{k},g_{k+1}))\in (M\times G)\ast(M\times G)$ is a solution of the discrete
Euler-Lagrange equations if
\[
\overleftarrow{(\t_m,\xi)}\left(m,g_k\right) \left( L\right) -%
\overrightarrow{(\t_m,\xi)}\left(m\cdot g_k,g_{k+1}\right) \left( L\right) =0,
\]
for any $\xi \in \G{g}$. Equivalently,
\[
\left( T_{e}\ell_{g_{k}}\right) \left( \xi \right) (L_m)-\left(
T_{e}r_{g_{k+1}}\right) \left( \xi \right) (L_{m g_{k}})+\xi^\dagger\left(
m  g_{k}\right) \left( L_{g_{k+1}}\right) =0,
\]%
where $L_m:G \to \B{R}$ is the map given by $L_m(g):=L(m,g)$, and similarly $L_g:M\to \B{R}$ is the one given by $L_g(m):= L(m,g)$.

Setting $\mu _{k}\left( m,g_{k}\right) =d(L_m\circ r_{g_{k}})\left(e\right)$ as above, the discrete Euler-Lagrange equations appear to be
\[
\mu _{k+1}\left( m\cdot g_{k},g_{k+1}\right) =\Ad_{g_{k}}^{\ast }\mu _{k}\left(
m,g_{k}\right) +d\left( L_{g_{k+1}}\circ \left( \left(m\cdot g_{k}\right) \cdot
\right) \right) \left( e\right) ,
\]
where $\left(m\cdot g_{k}\right) \cdot :G\rightarrow M$ is given by $\left( m\cdot g_{k}\right) \cdot (g):=m\cdot (g_{k}g)$.

If, in particular, $M$ is the orbit space of a representation of $G$ on $V$, then the corresponding equations were first obtained in \cite{BobeSuri99,BoSu99}, and they are called the discrete Euler-Poincar\'e equations.
\end{example}

\begin{example}
Let $M\times G \times M$ be the trivial groupoid of Example \ref{ex-trivial-groupoid}, whose left invariant and right invariant vector fields are obtained in Example \ref{ex-triv-grpd-left-right-inv-v-f}.  Accordingly, given a Lagrangian $L:M\times G \times M \to \B{R}$, together with $(m_k,g_k,n_k), (m_{k+1},g_{k+1},n_{k+1}) \in M\times G \times M$ with 
\[
\b(m_k,g_k,n_k) = n_k = m_{k+1} = \a(m_{k+1},g_{k+1},n_{k+1}),
\]
the discrete Euler-Lagrange equations are given by
\[
(\t_{m_k}, \overleftarrow{\xi}(g_k), X)(L) - (-X,\overrightarrow{\xi}(g_{k+1}),\t_{n_{k+1}})(L) = 0.
\]
Setting $\mu _{k}=T^*r_{g_k}d_2L(m_k,g_k,n_k)$, the discrete Euler-Lagrange equations appear as
\begin{equation}\label{discrete-tirivial-I}
\left \langle X, d_1L(n_k,g_{k+1},n_{k+1}) + d_3L (m_k,g_k,n_k) \right \rangle +  \left \langle\xi, \Ad^*_{g_k^{-1}}(\mu _{k})-\mu _{k+1} \right \rangle = 0,
\end{equation}
where, for $1 \leq i \leq 3$, $d_i$ stands for the derivative with respect to the $i$th variable.
\end{example}

\subsection{Discrete dynamics on matched pairs of Lie groupoids}\label{subsect-disc-dynm-on-matched-pair-grpoid}~

\noindent
In this section, we shall rewrite the discrete Euler-Lagrange equations \eqref{eqn-disct-EL} for a matched pair Lie groupoid, that is,
\begin{equation}\label{eqn-matched-discrete-dynm-eqns}
\overleftarrow{U}\left( g_{k},h_k\right) (L)-\overrightarrow{U}\left(g_{k+1},h_{k+1}\right) (L)=0,
\end{equation}
generated by the Lagrangian $L:\C{G}\bowtie\C{H} \to \B{R}$. In view of \eqref{eqn-matched-LIVF} and \eqref{eqn-matched-RIVF}, the equation \ref{eqn-matched-discrete-dynm-eqns} takes the form
\begin{equation} \label{MDD}
\bigg(\overleftarrow{h_k\rt X}(g_k), X^\dagger(h_k)+\overleftarrow{Y}(h_k)\bigg) (L) - \bigg(\overrightarrow{X}(g_{k+1})-Y^\dagger(g_{k+1}),\overrightarrow{Y\lt g_{k+1}}(h_{k+1})\bigg) (L) = 0.
\end{equation}
As such, we arrive at the following proposition.

\begin{proposition} \label{dEL-MP}
Given a matched pair $(\C{G},\C{H})$ of Lie groupoids, the discrete Euler-Lagrange equations on the matched pair groupoid $\C{G}\bowtie\C{H}$ generated by the Lagrangian $L:\C{G}\bowtie\C{H} \to \B{R}$ is given by
 \begin{equation} \label{MDD2}
 \overleftarrow{h_k\rt X}(g_k)(L)-\overrightarrow{X}(g_{k+1})(L) +Y^\dagger(g_{k+1})(L) +  X^\dagger(h_k)(L)+\overleftarrow{Y}(h_k)(L)-\overrightarrow{Y\lt g_{k+1}}(h_{k+1})(L) = 0.
 \end{equation}
\end{proposition}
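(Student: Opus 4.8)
The plan is to substitute the explicit descriptions of the left and right invariant vector fields on $\C{G}\bowtie\C{H}$, obtained in the two preceding propositions, into the discrete Euler-Lagrange equation \eqref{eqn-matched-discrete-dynm-eqns}, and then to expand the resulting directional derivatives by linearity.

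First I would fix a section $U\in\C{A}_b(\C{G}\bowtie\C{H})$ together with its decomposition $X\in\C{A}_b\C{G}$, $Y\in\C{A}_b\C{H}$ furnished by Proposition \ref{prop-matched-algoid-isom}. Evaluating the left invariant vector field \eqref{eqn-matched-LIVF} at $(g_k,h_k)$, so that $b=\b(h_k)$, and the right invariant vector field \eqref{eqn-matched-RIVF} at $(g_{k+1},h_{k+1})$, so that $b=\a(g_{k+1})$, and noting that the composability condition $\b(h_k)=\a(g_{k+1})$ guarantees that both evaluations refer to the same fiber $\C{A}_b$, substitution into \eqref{eqn-matched-discrete-dynm-eqns} reproduces \eqref{MDD} verbatim.

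The remaining step is to split the action of these two-component tangent vectors on $L$. Since $\C{G}\bowtie\C{H}=\C{G}\ast\C{H}$ is an embedded submanifold of $\C{G}\times\C{H}$, its tangent space at $(g,h)$ lies inside $T_g\C{G}\oplus T_h\C{H}$, and the directional derivative of $L$ along a tangent vector is linear in that vector. Hence, for a vector presented in component form, one has $(v,w)(L)=v(L)+w(L)$, and each component may be differentiated separately; moreover the second slot, being itself a sum $X^\dagger(h_k)+\overleftarrow{Y}(h_k)$, splits further. Carrying this out in \eqref{MDD} gives
\[
\overleftarrow{h_k\rt X}(g_k)(L)+\big(X^\dagger(h_k)+\overleftarrow{Y}(h_k)\big)(L)-\big(\overrightarrow{X}(g_{k+1})-Y^\dagger(g_{k+1})\big)(L)-\overrightarrow{Y\lt g_{k+1}}(h_{k+1})(L)=0,
\]
and distributing each derivative over the internal sums and differences rearranges this precisely into \eqref{MDD2}.

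The step I expect to demand the most care is the justification of the additive splitting over the fibered product: one must check that each individual summand is the derivative of $L$ along a genuine curve in $\C{G}\bowtie\C{H}$, rather than merely in the ambient product $\C{G}\times\C{H}$, so that the constraint $\b(g)=\a(h)$ is respected. Because every component appearing in \eqref{eqn-matched-LIVF} and \eqref{eqn-matched-RIVF} was itself produced by differentiating a curve lying in $\C{G}\bowtie\C{H}$, as in the proofs of the two preceding propositions, this is automatic, and the remainder of the argument is a routine application of the linearity of the directional derivative.
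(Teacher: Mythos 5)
Your proposal is correct and takes essentially the same route as the paper: the paper offers no separate argument beyond substituting the invariant vector field formulas \eqref{eqn-matched-LIVF} and \eqref{eqn-matched-RIVF} into the discrete Euler-Lagrange equation \eqref{eqn-matched-discrete-dynm-eqns} to obtain \eqref{MDD}, and then expanding by linearity of the directional derivative to reach \eqref{MDD2}. Your extra care in checking that both evaluations refer to the same fiber $\C{A}_b$ via $\b(h_k)=\a(g_{k+1})$, and in justifying the additive splitting over the fibered product $\C{G}\ast\C{H}$, goes slightly beyond what the paper records but does not alter the argument.
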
 
 In particular, considering the left action of $\C{H}$ on $\C{G}$ to be trivial, the equation \eqref{MDD2} reduces to 
  \begin{equation}
 \overleftarrow{X}(g_k)(L)-\overrightarrow{X}(g_{k+1})(L) +
 X^\dagger(h_k)(L)+\overleftarrow{Y}(h_k)(L)-\overrightarrow{Y\lt g_{k+1}}(h_{k+1})(L) = 0.
 \end{equation}
 
Similarly, considering this time the right action of $\C{G}$ on $\C{H}$ to be trivial, the equation \eqref{MDD2} takes the form of
 \begin{equation} 
 \overleftarrow{h_k\rt X}(g_k)(L)-\overrightarrow{X}(g_{k+1})(L)+Y^\dagger(g_{k+1})(L) +  \overleftarrow{Y}(h_k)(L)-\overrightarrow{Y}(h_{k+1})(L) = 0.
 \end{equation}
 
If both actions are trivial, then the equation \eqref{MDD2} simplifies to
\[
\overleftarrow{X}(g_k)(L)-\overrightarrow{X}(g_{k+1})(L) + \overleftarrow{Y}(h_k)(L)-\overrightarrow{Y}(h_{k+1})(L) = 0.
\] 

We conclude the present section with yet another particular case of \eqref{eqn-disct-EL} , or of \eqref{eqn-matched-discrete-dynm-eqns}, for a matched pair of Lie groups (regarded as Lie groupoids) to analyse the discerete dynamics on Lie groups from the matched pair point of view.

\subsection{Discrete dynamics on matched pairs of Lie groups}\label{subsect-disc-dynm-matched-Lie-grps}~

\noindent
Let $G$ and $H$ be two Lie groups with mutual actions
\begin{eqnarray} \label{gractions}
\rho:H\times G\rightarrow G,\qquad\left(  h,g\right)  \mapsto
h\rt g, \\\label{hlactions}
\sigma:H\times G\rightarrow H,\qquad\left(  h,g\right)  \mapsto
h\lt g,
\end{eqnarray}
where $h,h_{1},h_{2}\in H$, $g,g_{1},g_{2}\in G$, $e_{H}$ is the identity element in
$H$, and $e_{G}$ is the identity element in
$G$. If the actions \eqref{gractions}-\eqref{hlactions} satisfy
\begin{align}
{\label{condmp}}h\rt\left(  g_{1}g_{2}\right)   &  =\left(
h\rt g_{1}\right)  \left(  \left(  h\lt
g_{1}\right)  \rt g_{2}\right)  ,\\
(h_{1}h_{2})\lt g  &  =\left(  h_{1}\lt\left(
h_{2}\rt g\right)  \right)  \left(  h_{2}\lt
g\right),
\end{align}
then the pair $(G,H)$ is called a mathed pair of Lie groups, \cite{Maji90,Majid-book}. In this case, the cartesian product $G\times H$ may be equipped with the group structure given by the multiplication
\[
(  g_{1},h_{1})  (  g_{2} ,h_{2})   =\left(  g_{1}\left(  h_{1}\rt
g_{2}\right)  ,\left(  h_{1}\lt g_{2}\right)  h_{2}\right)
=\left(  g_{1}\rho\left(  h_{1},g_{2}\right)  ,\sigma\left(  h_{1}%
,g_{2}\right)  h_{2}\right),
\]
and the unit element $\left(e_{G},e_{H}\right)$. This matched pair group is denoted by  $G\bowtie H$. Conversely, if a Lie group $M$ is a cartesian product of two subgroups $G\hookrightarrow M \hookleftarrow H$,
and if the multiplication on $M$ defines a bijection $G\times H\to M$, then
$M$ is a matched pair, that is, $M\cong G\bowtie H$. In this case, the mutual
actions are derived from
\begin{equation}
h\cdot g=\left(  h\rt g\right)  \left(  h\lt
g\right), \label{aux-mutual-actions-group}%
\end{equation}
for any $g\in G$, and any $h\in H$. Let us also record here the inversion in $G\bowtie H$ as
\begin{equation}
( g,h)^{-1}=\left(  h^{-1}\rt g^{-1}%
,h^{-1}\lt g^{-1}\right)  . \label{invelement}%
\end{equation}
for later use.

We next consider the lifting of the group actions to the Lie algebra level. As for the left action, we have
\begin{eqnarray}
H\times \mathfrak{g} \lra \mathfrak{g}, \qquad (h,\xi)\mapsto h\rt \xi:=\dt   h\rt x_t, \label{Hong-g}\\
\mathfrak{h}\times G \lra TG, \qquad (\eta,g) \mapsto \eta^\dagger(g):=\eta \rt g:=\dt   y_t \rt g, \label{honG-g}
\end{eqnarray}
where $\G{g}$ denotes the Lie algebra of the Lie group $G$, as $\G{h}$ stands for the Lie algebra of $H$, $x_t \in G$ is a curve passing through the identity at $t=0$ in the direction of $\xi\in\G{g}$, and finally $y_t\in H$ is a curve passing through the identity in the direction of $\eta\in\G{h}$.

Freezing the group element in \eqref{Hong-g}, we arrive at a linear mapping $h~\rt:\mathfrak{g} \to \G{g}$ for any $h \in H$. We shall denote the transpose of this mapping by $\overset{\ast}{\lt}~ h:\mathfrak{g}^* \to \mathfrak{g}^*$, which is given by 
\begin{equation} \label{*h}
\langle h\rt\xi, \mu\rangle =\langle \xi, \mu \overset{\ast}{\lt} h\rangle,
\end{equation}
for any $\mu \in \mathfrak{g}^*$, where the pairing is the one between $\mathfrak{g}^*$ and $\mathfrak{g}$. 

Similarly, freezing the group element in \eqref{honG-g}, we arrive at a linear operator $\G{b}_g:\mathfrak{h}\mapsto T_gG$, given by $\G{b}_g(\eta):=\eta \rt g$. The transpose of this mapping shall be denoted by $\G{b}_g^*:T_g^*G\mapsto \mathfrak{h}^*$, and it is given by 
\begin{equation} \label{b*}
\langle \eta \rt g, \mu_g \rangle =\langle \G{b}_g(\eta), \mu_g \rangle = \langle \eta, \G{b}_g^*(\mu_g) \rangle,
\end{equation}
for any $\mu_g\in T^*_g G$.

Now on the other hand, for the right $G$ action on $H$, we have the maps
\begin{eqnarray}
\mathfrak{h}\times G \lra \mathfrak{h}, \qquad (\eta,g)\mapsto \eta\lt g:=\dt   y_t\lt g, \label{Gonh-g-}\\
H \times \mathfrak{g} \lra TH, \qquad (h,\xi) \mapsto \xi^\dagger(h):=h\lt\xi:=\dt   h \lt x_t \label{gonH-g-}.
\end{eqnarray}
Similar to above, freezing the group element in \eqref{Gonh-g-} we arrive at a linear mapping $\lt~g:\G{h} \to \G{h}$ for any $g \in G$. The transpose of this map will be denoted by $g~\overset{\ast}{\rt}:\G{h}^\ast \to \G{h}^\ast$, and it is defined by
\begin{equation} \label{*g}
\langle \eta \lt g, \nu \rangle = \langle \eta, g\overset{\ast}{\rt}\nu \rangle,
\end{equation}
for any $\nu\in\mathfrak{h}^*$, where the pairing is the one between $\mathfrak{h}^*$ and  $\mathfrak{h}$. 

Finally, freezing the group element in \eqref{gonH-g-} we obtain a mapping $\G{a}_h:\G{g}\mapsto T_hH$, $\G{a}_h(\xi) = h \lt \xi$ for any $\xi \in \G{g}$. The transpose $\G{a}_h^*:T^*_hH\mapsto \G{g}^*$ of this linear mapping will be given by
\begin{equation} \label{a*}
\langle h\lt \xi,\nu_h \rangle=\langle \G{a}_h(\xi), \nu_h \rangle
= \langle \xi, \G{a}_h^*(\nu_h) \rangle,
\end{equation}
for any $\nu_h\in T^*_hH$.

We note also that if $G\bowtie H$ is a matched pair Lie group, then its Lie algebra is the matched pair Lie algebra $\G{g}\bowtie\G{h}$.  That is, the induced actions  
\[
\rt:\G{h}\ot\G{g}\rightarrow\G{g}
\hbox{ \ \ and \ \ }\lt:\G{h}\ot\G{g}\rightarrow\G{h}
\]
of the Lie algebras satisfy
\begin{equation}
\eta\rt\lbrack\xi_{1},\xi_{2}]=[\eta\rt\xi_{1}%
,\xi_{2}]+[\xi_{1},\eta\rt\xi_{2}]+(\eta\lt\xi
_{1})\rt\xi_{2}-(\eta\lt\xi_{2})\rt
\xi_{1} \label{LAc1}%
\end{equation}
and
\begin{equation}
\lbrack\eta_{1},\eta_{2}]\lt\xi=[\eta_{1},\eta_{2}\lt\xi]+[\eta_{1}\lt\xi
,\eta_{2}]+\eta_{1}\lt(\eta_{2}\rt\xi)-\eta
_{2}\lt(\eta_{1}\rt\xi) \label{LAc2},
\end{equation}
for any $\eta,\eta_{1},\eta_{2}\in\mathfrak{h}$, and any
$\xi,\xi_{1},\xi_{2}\in\mathfrak{g}$. Such a pair $(\mathfrak{g},\mathfrak{h})$ is called a matched pair of Lie algebras, and the Lie algebra structure on $\mathfrak{g\bowtie h}:=\mathfrak{g}
\oplus\mathfrak{h}$ is given by
\begin{equation}
\lbrack(\xi_{1},\eta_{1}),\,(\xi_{2},\eta_{2})]=\left(  [\xi_{1},\xi_{2}%
]+\eta_{1}\rt\xi_{2}-\eta_{2}\rt\xi_{1}%
,\,[\eta_{1},\eta_{2}]+\eta_{1}\lt\xi_{2}-\eta_{2}%
\lt\xi_{1}\right)  . \label{mpla}%
\end{equation}
It is immediate that both $\mathfrak{g}$ and $\mathfrak{h}$ are Lie
subalgebras of $\mathfrak{g}\bowtie\mathfrak{h}$ via the obvious inclusions.
Conversely, given a Lie algebra $\mathfrak{m}$ with two subalgebras
$\mathfrak{g} \hookrightarrow \mathfrak{m} \hookleftarrow \mathfrak{h}$, if $\mathfrak{m}%
\cong \mathfrak{g}\oplus\mathfrak{h}$ via $(\xi,\eta)\mapsto\xi+\eta$, then
$\mathfrak{m}\cong \mathfrak{g}\bowtie\mathfrak{h}$ as Lie algebras. In this case,
the mutual actions of the Lie algebras are uniquely determined by
\[
[\eta,\xi]=(\eta\rt\xi,\,\eta\lt\xi).
\]
On the other hand, there are integrability conditions under which a matched pair of Lie
algebras can be integrated into a matched pair of Lie groups. For a discussion of this
direction we refer the reader to \cite[Sect. 4]{Maji90-II}.

We shall also need the adjoint action of the matched pair Lie group $G \bowtie H$ on its Lie algebra $\G{g}\bowtie \G{h}$. For any $(g,h) \in G \bowtie H$, and any $(\xi,\eta)\in \mathfrak{g}\bowtie \mathfrak{h}$, we recall from \cite[(2.30)]{EsSu16} that
\begin{equation}
\Ad_{(g,h)^{-1}}(\xi,\eta)=(h^{-1}\rt\zeta,T_{h^{-1}}r_h
(h^{-1}\lt\zeta)+\Ad_{h^{-1}}(\eta\lt g)) \label{Ad}%
\end{equation}
where $\zeta := \Ad_{g^{-1}}(\xi)+T_gL_{g^{-1}}(\eta\rt
g) \in \mathfrak{g}$. 

Furthermore, the tangent lifts of the left and right regular actions of $G\bowtie H$ are given in \cite[(2.54)\&(2.55)]{EsenSutl17} as
\begin{align*}
T_{\left(  g_{2},h_{2}\right)  }L_{\left(  g_{1},h_{1}\right)  }\left(
U_{g_{2}},V_{h_{2}}\right) =
\left(  T_{h_{1}\rt g_{2}%
}L_{g_{1}}\left(  h_{1}\rt U_{g_{2}}\right)  ,\,T_{h_{1}%
\lt g_{2}}R_{h_{2}}\left(  h_{1}\lt U_{g_{2}%
}\right)  +T_{{h_{2}}}L_{\left(  h_{1}\lt g_{2}\right)  }%
V_{h_{2}}\right), \nonumber\\ 
T_{\left(  g_{1},h_{1}\right)  }R_{\left(  g_{2},h_{2}\right)  }\left(
U_{g_{1}},V_{h_{1}}\right)  = \left(  T_{g_{1}}R_{\left(  h_{1}%
\rt g_{2}\right)  }U_{g_{1}}+T_{h_{1}\rt g_{2}%
}L_{g_{1}}\left(  V_{h_{1}}\rt g_{2}\right)  ,T_{h_{1}%
\lt g_{2}}R_{h_{2}}\left(  V_{h_{1}}\lt
g_{2}\right)  \right).
\end{align*}
We can thus compute the left and right invariant vector fields generated by a Lie algebra element 
$(\xi,\eta) \in \G{g} \bowtie \G{h}$ as 
\begin{align} \label{rlinvgroup}
 \overleftarrow{(\xi,\eta)}{\left( g,h\right)  }=  T_{\left(  e_G,e_H\right)  }L_{\left(  g,h\right)  }\left(
\xi,\eta\right) =(\overleftarrow{h\rt\xi}(g),h\lt\xi+\overleftarrow{\eta}(h)), \\\label{rrinvgroup}
 \overrightarrow{(\xi,\eta)}{\left( g,h\right)  }=T_{\left(  e_G,e_H\right)  }R_{\left(  g,h\right)  }\left(
\xi,\eta\right) =(\overrightarrow{\xi}(g)+\eta\rt g, \overrightarrow{\eta\lt g}(h)).
\end{align}
Recalling the discrete Euler-Lagrange equations \eqref{discrete-EL-on-Lie-group}, discrete dynamics on $G\bowtie H$ generated by a Lagrangian function $L:G\bowtie H \to \B{R}$ is then given by
\begin{align} \label{MDDG}
\overleftarrow{(\xi,\eta)}{\left( g
_k,h_k\right)  }(L)-\overrightarrow{(\xi,\eta)}{\left( g
_{k+1},h_{k+1}\right)  }(L)=0.
\end{align}
Let now the exterior derivative of the Lagrangian $L:G\bowtie H \to \B{R}$ be a two-tuple $(d_1 L, d_2L)$, where  $d_1 L$ denotes the derivative with respect to group variable $g\in G$ whereas $d_2 L$ denotes the derivative with respect to group variable $h\in H$. Then, in view of the left and right invariant vector fields \eqref{rlinvgroup} - \eqref{rrinvgroup}, we arrive at 
  \begin{align*}
 \left \langle \overleftarrow{h_k\rt\xi}(g_k), d_1 L(g_k,h_k) \right \rangle+
 \left \langle h_k\lt\xi,d_2L(g_k,h_k)\right \rangle +
  \left \langle \overleftarrow{\eta}(h_k), d_2L(g_k,h_k) \right \rangle \\ -
 \left \langle \overrightarrow{\xi}(g_{k+1}),d_1 L(g_{k+1}\right \rangle -
 \left \langle \eta\rt g_{k+1},d_1 L(g_{k+1}) \right \rangle
 - \left \langle \overrightarrow{\eta\lt g_{k+1}}(h_{k+1}), d_2L(g_{k+1},h_{k+1}) \right \rangle =0.
  \end{align*}
It is possible to single out $\xi \in \G{g}$ and $\eta \in \G{h}$ from these equations, that is,
    \begin{align*}
    \Big\langle 
    \xi,\left ( T^*L_{g_k} \cdot d_1L (g_k,h_k)\right)\overset{\ast}{\lt}h_k
    +\G{a}^*_{h_{k}}d_2 L (g_k,h_k)-T^*R_{g_{k+1}} \cdot d_1L (g_{k+1},h_{k+1})
    \Big\rangle \\
    +  \Big\langle
     \eta, T^*L_{h_k}\cdot d_2 L (g_k,h_k)
     -\G{b}^*_{g_{k+1}}d_1 L (g_{k+1},h_{k+1})
     - g_{k+1}\overset{\ast}{\rt} T^*R_{h_{k+1}} \cdot d_2L (g_{k+1},h_{k+1})
    \Big\rangle=0.
     \end{align*}

\begin{proposition} \label{dEL-MP-group}
In particular, taking the covectors  
     \begin{align*}
 T^*R_{g_{k}}\cdot d_1 L (g_{k},h_{k})=\mu_k\in\mathfrak{g}^*, \qquad    
     T^*R_{h_{k}}\cdot d_2 L (g_{k},h_{k})=\nu_k \in\mathfrak{h}^*,
     \end{align*}
the discrete Euler-Lagrange equations on the matched pair Lie group $G \bowtie H$ can be written as
\begin{equation} \label{MDDG-1}
     \Ad^*_{g_k^{-1}}(\mu_k)\overset{\ast}{\lt}h_k+a^*_{h_{k}}d_2 L (g_k,h_k)-\mu_{k+1} + \Ad^*_{h_k^{-1}}(\nu_k)
     -b^*_{g_{k+1}}d_1 L (g_{k+1},h_{k+1})-g_{k+1}\overset{\ast}{\rt} \nu_{k+1}=0.
\end{equation}
\end{proposition}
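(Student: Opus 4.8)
The plan is to evaluate the discrete Euler--Lagrange equation \eqref{MDDG} by inserting the explicit invariant vector fields and then transporting every group action off the vector fields and onto the covectors $d_1L$ and $d_2L$. First I would substitute the left invariant field \eqref{rlinvgroup} and the right invariant field \eqref{rrinvgroup} into \eqref{MDDG}, pairing the $\G{g}$-slot against $d_1L$ and the $\G{h}$-slot against $d_2L$. This reproduces the six-term expression displayed immediately before the statement, with three terms coming from $\overleftarrow{(\xi,\eta)}(g_k,h_k)$ and three from $\overrightarrow{(\xi,\eta)}(g_{k+1},h_{k+1})$.

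Next I would clear the group actions using the transpose identities already at our disposal. Writing $\overleftarrow{\zeta}(a)=T_{e}L_a(\zeta)$ and $\overrightarrow{\zeta}(a)=T_eR_a(\zeta)$, the three ``left'' terms become $\langle \xi,(T^*L_{g_k}\cdot d_1L)\overset{\ast}{\lt}h_k\rangle$ via \eqref{*h}, then $\langle\xi,\G{a}^*_{h_k}d_2L\rangle$ via \eqref{a*}, and $\langle\eta, T^*L_{h_k}\cdot d_2L\rangle$; the three ``right'' terms become $-\langle\xi,T^*R_{g_{k+1}}\cdot d_1L\rangle$, then $-\langle\eta,\G{b}^*_{g_{k+1}}d_1L\rangle$ via \eqref{b*}, and $-\langle\eta, g_{k+1}\overset{\ast}{\rt}(T^*R_{h_{k+1}}\cdot d_2L)\rangle$ via \eqref{*g}. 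Collecting the coefficients of $\xi$ and of $\eta$ yields precisely the two-bracket form preceding the statement. Since $\xi\in\G{g}$ and $\eta\in\G{h}$ are arbitrary and independent, the equation splits: the coefficient of $\xi$ (an element of $\G{g}^*$) and the coefficient of $\eta$ (an element of $\G{h}^*$) must each vanish, which is exactly how \eqref{MDDG-1} is to be read inside $\G{g}^*\oplus\G{h}^*$.

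Finally I would rewrite the ``left'' momenta in adjoint form. Applying the identity $T^*L_{a}=\Ad^*_{a^{-1}}\circ T^*R_a$ on covectors at $a$ --- the same computation carried out in the Lie group example of Subsection \ref{subsect-disc-Euler-Lagrange-eqns} --- together with the definitions $\mu_k=T^*R_{g_k}\cdot d_1L(g_k,h_k)$ and $\nu_k=T^*R_{h_k}\cdot d_2L(g_k,h_k)$, turns $T^*L_{g_k}\cdot d_1L$ into $\Ad^*_{g_k^{-1}}(\mu_k)$ and $T^*L_{h_k}\cdot d_2L$ into $\Ad^*_{h_k^{-1}}(\nu_k)$; the remaining right-hand momenta are $\mu_{k+1}$ and $g_{k+1}\overset{\ast}{\rt}\nu_{k+1}$ by definition. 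Substituting these identifications into the two vanishing components reproduces the $\G{g}^*$-part and the $\G{h}^*$-part of \eqref{MDDG-1}.

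I expect the only genuine obstacle to be bookkeeping rather than anything conceptual: one must keep each transpose on its correct side --- in particular that $\mu\overset{\ast}{\lt}h_k$, $g_{k+1}\overset{\ast}{\rt}\nu_{k+1}$, $\G{a}^*_{h_k}$ and $\G{b}^*_{g_{k+1}}$ act on exactly the slots dictated by \eqref{*h}--\eqref{a*} --- and one must recognise that the displayed equation is an identity in $\G{g}^*\oplus\G{h}^*$, so that the $\G{g}^*$- and $\G{h}^*$-parts are two independent equations glued together by the notation. No new geometric input beyond the invariant vector fields \eqref{rlinvgroup}--\eqref{rrinvgroup} and the pairing conventions \eqref{*h}--\eqref{a*} is needed.
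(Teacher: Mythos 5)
Your proposal is correct and follows essentially the same route as the paper: substituting the invariant vector fields \eqref{rlinvgroup}--\eqref{rrinvgroup} into \eqref{MDDG}, transferring the actions onto the covectors via \eqref{*h}, \eqref{b*}, \eqref{*g}, \eqref{a*} to isolate $\xi$ and $\eta$, and then converting $T^*L_{g_k}\cdot d_1L$ and $T^*L_{h_k}\cdot d_2L$ into $\Ad^*_{g_k^{-1}}(\mu_k)$ and $\Ad^*_{h_k^{-1}}(\nu_k)$ exactly as in the Lie group example of Subsection \ref{subsect-disc-Euler-Lagrange-eqns}. Your observation that \eqref{MDDG-1} is an identity in $\G{g}^*\oplus\G{h}^*$, hence two independent equations, is also the correct reading of the paper's single displayed equation.
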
 

Furthermore, when the (right) action of $G$ on $H$ is trivial, we have the discrete Euler-Lagrange equation 
       \begin{equation} \label{MDDG-2}
     \Ad^*_{g_k^{-1}}(\mu_k)\overset{\ast}{\lt}h_k-\mu_{k+1} +  \Ad^*_{h_k^{-1}}(\nu_k) -b^*_{g_{k+1}}d_1 L (g_{k+1},h_{k+1})-\nu_{k+1}=0
       \end{equation}
on the semidirect product Lie group $G\rtimes H$.
      
On the other extreme, assuming the (left) action of $H$ on $G$ to be trivial, we arrive at the equation
\begin{equation} \label{MDDG-3}
     \Ad^*_{g_k^{-1}}(\mu_k)+a^*_{h_{k}}d_2 L (g_k,h_k)-\mu_{k+1}+ 
     \Ad^*_{h_k^{-1}}(\nu_k)-g_{k+1}\overset{\ast}{\rt} \nu_{k+1}=0
       \end{equation}
on the semidirect product Lie group $G\ltimes H$.

If both actions are trivial, then  the equations reduce all the way down to 
\begin{equation} \label{MDDG-4}
     Ad^*_{g_k^{-1}}(\mu_k)-\mu_{k+1} + 
     Ad^*_{h_k^{-1}}(\nu_k)- \nu_{k+1}=0.
\end{equation}

\section{Examples} \label{ex}

\subsection{Discrete dynamics on the trivial groupoid} \label{ex1}~

\noindent
In this subsection we shall illustrate the discrete Euler-Lagrange equation on the trivial groupoid of Example \ref{ex-trivial-groupoid}, regarded as the matched pair groupoid of the coarse (banal) groupoid of Example \ref{ex-coarse-groupoid} and the action groupoid of Example \ref{ex-action-groupoid}. To this end, we first recall from Example \ref{ex-left-inv-right-inv-vf-MG-MM} that
given $(m,g;mg,n) \in (M\times G) \bowtie (M\times M)$ and $(\t_n,\xi;\xi^\dagger(n),Y) \in \C{A}_n((M\times G) \bowtie (M\times M))$, we have
\[
\overleftarrow{(\t_n,\xi;\xi^\dagger(n),Y)} (m,g;mg,n)= (\t_n,\overleftarrow{\xi}(g);\xi^\dagger(mg),\xi^\dagger(n)+X),
\]
and similarly, for any $(\t_m,\xi;\xi^\dagger(m),Y) \in \C{A}_m((M\times G) \bowtie (M\times M))$, 
\[
\overrightarrow{(\t_m,\xi;\xi^\dagger(m),Y)}(m,g;mg,n) = (-\xi^\dagger(m)-Z,\overrightarrow{\xi}(g);-Z\lt g,\t_n).
\]
Now, given $(m_k,g_k;m_kg_k,n_k), (m_{k+1},g_{k+1};m_{k+1}g_{k+1},n_{k+1}) \in (M\times G) \bowtie (M\times M)$, so that 
\[
\b((m_k,g_k;m_kg_k,n_k)) = n_k = m_{k+1} = \a(m_{k+1},g_{k+1};m_{k+1}g_{k+1},n_{k+1}),
\]
and a Lagrangian $L:(M\times G) \bowtie (M\times M) \to \B{R}$, the equation \eqref{MDD2} yields
\begin{align}\label{discrete-tirivial-matched-pair-I}
\begin{split}
& \left \langle X(n_{k}),d_1L(n_k,g_{k+1};m_{k+1}g_{k+1},n_{k+1})+d_4L (m_k,g_k;m_kg_k,n_k)\right\rangle  +\\ 
& \left\langle \xi, Ad^*_{g_k^{-1}}\mu _{k}-\mu _{k+1} \right\rangle
+ \left \langle \xi^\dagger(m_kg_k),d_3L(m_k,g_k;m_kg_k,n_k)\right\rangle  \\
& \left \langle \xi^\dagger(n_{k}),d_1L(n_k,g_{k+1};m_{k+1}g_{k+1},n_{k+1})+d_4L (m_k,g_k;m_kg_k,n_k)\right\rangle + \\
& \left \langle X(n_{k})\lt g_{k+1},d_3L(n_k,g_{k+1};m_{k+1}g_{k+1},n_{k+1})\right\rangle =0
\end{split}
\end{align}
where $\mu _{k}=T^*r_{g_k}d_2L(m_k,g_k;m_kg_k,n_k)$, and for $1\leq j \leq 4$, the operator $d_j$ denotes the derivative with respect to the $j$th variable.

\begin{remark}
Let us note that the equations \eqref{discrete-tirivial-matched-pair-I} above correspond to the discrete Euler-Lagrange equations \eqref{discrete-tirivial-I} on the trivial groupoid $M\times G \times M$, under the isomorphism \eqref{map-Phi}. More precisely, on one hand we have
\begin{align*}
& \overleftarrow{(\t_n,\xi;\xi^\dagger(n),Y)} (m,g;mg,n)= (\t_m,\overleftarrow{\xi}(g);\xi^\dagger(mg),\xi^\dagger(n)+X) = \\
& \dt\, (m,ge_t;mge_t,n_te_t) = \dt\,\Phi(m,ge_t,n_te_t) = T\Phi\left(\t_m,\overleftarrow{\xi}(g),\xi^\dagger(n)+X\right) = \\
& T\Phi\left(\overleftarrow{(\t_n,\xi,\xi^\dagger(n)+X)}(m,g,n)\right),
\end{align*}
where on the other hand,
\begin{align*}
& \overrightarrow{(\t_m,\xi;\xi^\dagger(m),Y)}(m,g;mg,n) = (-\xi^\dagger(m)-Z,\overrightarrow{\xi}(g);-Z\lt g,\t_n) = \\
& - \dt\, (m_te_t,e_t^{-1}g;m_tg,n) = -\dt\, \Phi(m_te_t,e_t^{-1}g,n) = T\Phi\left(-\xi^\dagger(m)-Z,\overrightarrow{\xi}(g),\t_n\right) =\\
& T\Phi\left(\overrightarrow{(\t_m,\xi,\xi^\dagger(m)+Z)}(m,g,n)\right).
\end{align*}
The correspondence, then, follows at once.
\end{remark}

\subsection{Discrete Dynamics on $SL(2,\mathbb{C})=SU(2)\bowtie K$}\label{ex2}~

\noindent
In this subsection, we shall study the discrete Euler-Lagrange equations on the Lie group $SL(2,\mathbb{C})$ from the matched pair point of view. To this end, we shall first recall its decomposition
\begin{equation}\label{ExMPLG}
SL(2,\mathbb{C}) =SU(2)\bowtie K
\end{equation}
from \cite{Maji90-II}, see also \cite{EsSu16,EsenSutl17}, the group structures, the mutual actions of the groups $SU(2)$ and $K$, together with their lifts. 

The group
\begin{equation}
SU(2)=\left\{
\begin{pmatrix}
\omega & \vartheta\\
-\bar{\vartheta} & \bar{\omega}%
\end{pmatrix}
\in SL(2,\mathbb{C}):\left\vert \omega\right\vert ^{2}+\left\vert
\vartheta\right\vert ^{2}=1\right\}  \label{SU(2)-1}
\end{equation}
in the matched pair decomposition\eqref{ExMPLG} is a universal
double cover of the group $SO\left( 3\right) $. As such, for each element $A\in SU(2)$ there exists a unique matrix ${\rm Rot}_A\in SO(3)$. The Lie algebra $\mathfrak{su}(2)$ of the group $SU(2)$ is the matrix Lie algebra
\[
\mathfrak{su}(2)=\left\{  \frac{-\iota}{2}\left(
\begin{array}
[c]{cr}%
t & r-\iota s\\
r+\iota s & -t
\end{array}
\right)  :r,s,t\in\mathbb{R}\right\}
\]
of traceless skew-hermitian matrices. Following \cite{Maji90-II} we fix three matrices
\begin{equation}
e_{1}=\left(
\begin{array}
[c]{cc}%
0 & -\iota/2\\
-\iota/2 & 0
\end{array}
\right)  ,\,\,e_{2}=\left(
\begin{array}
[c]{cc}%
0 & -1/2\\
1/2 & 0
\end{array}
\right)  ,\,\,e_{3}=\left(
\begin{array}
[c]{cc}%
-\iota/2 & 0\\
0 & \iota/2
\end{array}
\right)  \label{basissu(2)}%
\end{equation}
as a basis of the Lie algebra $\mathfrak{su}(2)$. We further make use of this to identify
the matrix Lie algebra $\mathfrak{su}(2)$ with the Lie algebra $\mathbb{R}^{3}$ by the cross product;
\begin{equation}
re_{1}+se_{2}+te_{3}\in\mathfrak{su}(2)\longleftrightarrow\mathbf{X}=\left(
r,s,t\right)  \in\mathbb{R}^{3}. \label{su(2)toR}
\end{equation}
We also identify the dual space $\mathfrak{su}(2)^{\ast}$ of $\mathfrak{su}%
(2)\cong\mathbb{R}^{3}$ with $\mathbb{R}^{3}$ using the Euclidean dot product. Using
this dualization, we can express the coadjoint action of the Lie algebra
$\mathfrak{su}(2)\cong\mathbb{R}^{3}$ on
$\mathfrak{su}^{\ast}(2)\cong\mathbb{R}^{3}$ as%
\begin{equation}
\ad^{\ast}: \mathfrak{su}(2)\times\mathfrak{su}^{\ast}(2)\rightarrow
\mathfrak{su}^{\ast}(2),\text{ \ \ } (\mathbf{X},\mathbf{\Phi}) \mapsto
\ad_{\mathbf{X}}^{\ast}\mathbf{\Phi} := \mathbf{X}\times\mathbf{\Phi}, {\label{ad*1}}%
\end{equation}
for any $\mathbf{X}\in\mathfrak{su}(2)\cong\mathbb{R}^{3}$, and any $\mathbf{\Phi}%
\in\mathfrak{su}^{\ast}(2)\simeq\mathbb{R}^{3}$.

The simply-connected group  $K$, on the other hand, may be represented by
\begin{equation}
K=\left\{  \frac{1}{\sqrt{1+c}}%
\begin{pmatrix}
1+c & 0\\
a+ib & 1
\end{pmatrix}
\in SL(2,\mathbb{C})\mid a,b\in\mathbb{R} \text{ and }c>-1\right\}  \label{K-1}
\end{equation}
where the group operation is the matrix multiplication. The Lie algebra $\mathfrak{K}$ of the group $K$ is thus given by
\begin{equation}
\mathfrak{K}=\left\{  \left(
\begin{array}
[c]{cc}%
\frac{1}{2}c & 0\\
a+i b & \frac{-1}{2}c
\end{array}
\right)  \in\mathfrak{sl}\left(  2,\mathbb{C}\right)  \mid a,b,c\in
\mathbb{R}\right\}  \label{k-1}%
\end{equation}
with matrix commutator being the Lie bracket.
The group $K$ can also be realised as a subgroup of $GL(3,\mathbb{R})$ as
\begin{equation}\label{K-2}
K=\left\{
\begin{pmatrix}
1+c & 0 & 0\\
0 & 1+c & 0\\
-a & -b & 1
\end{pmatrix}
\in GL\left(  3,\mathbb{R}\right) \mid a,b\in \mathbb{R} \text{ and }c>-1\right\},
\end{equation}
where the group operation is the matrix multiplication. In this case, its Lie algebra $\mathfrak{K}$ is given by
\begin{equation}
\mathfrak{K}=\left\{  \left(
\begin{array}
[c]{ccc}%
c & 0 & 0\\
0 & c & 0\\
-a & -b & 0
\end{array}
\right)  \in\mathfrak{gl}(3,\mathbb{R})\mid a,b,c\in\mathbb{R}\right\},
\label{k-2}%
\end{equation}
where the Lie bracket is the matrix commutator. The group $K$ can, alternatively, be identified with the
subspace
\begin{equation}
K=\left\{  (a,b,c)\in%
\mathbb{R}
^{3} \mid a,b\in%
\mathbb{R}
\text{ and }c>-1\right\}   \label{K-3}%
\end{equation}
of $%
\mathbb{R}^{3}$ with a non-standard multiplication
\begin{equation*}
(a_{1},b_{1},c_{1})\ast(a_{2},b_{2},c_{2})=(a_{1},b_{1},c_{1})(1+c_{2}%
)+(a_{2},b_{2},c_{2}),
\end{equation*}
in which case the Lie algebra $\mathfrak{K}$ is $\mathbb{R}^{3}$ via the Lie bracket 
\begin{equation}
\lbrack\mathbf{Y}_{1},\mathbf{Y}_{2}] = \mathbf{k}\times
(\mathbf{Y}_{1}\times\mathbf{Y}_{2}), \label{k-3}%
\end{equation}
where $\mathbf{k}$ is the unit vector $(0,0,1) \in \B{R}^3$. In this case, using the dot product, we may identify the dual space $\mathfrak{K}^{\ast}$  with $\mathbb{R}^3$ as well. Then, the coadjoint action of the Lie algebra $\mathfrak{K}%
\cong\mathbb{R}^{3}$ on its dual space $\mathfrak{K}^{\ast} \cong \mathbb{R}^3$ can be computed as
\begin{equation}
\ad^{\ast}: \mathfrak{K}\times\mathfrak{K}^{\ast}\rightarrow\mathfrak{K}^{\ast
},\text{ \ \ } (\mathbf{Y},\mathbf{\Psi})\mapsto \ad_{\mathbf{Y}}^{\ast
}\mathbf{\Psi}:=\left(  \mathbf{k}\cdot\mathbf{Y}\right)  \mathbf{\Psi}-\left(
\mathbf{\Psi}\cdot\mathbf{Y}\right)  \mathbf{k}, {\label{ad*2}}%
\end{equation}
for any $\mathbf{Y}\in\mathfrak{K}\simeq\mathbb{R}^{3}$, and any $\mathbf{\Psi}%
\in\mathfrak{K}^{\ast}\simeq\mathbb{R}^{3}$. The group isomorphisms relating (\ref{K-1}), (\ref{K-2}), and (\ref{K-3}) are
given by
\begin{equation}
\frac{1}{\sqrt{1+c}}%
\begin{pmatrix}
1+c & 0\\
a+ib & 1
\end{pmatrix}
\longleftrightarrow%
\begin{pmatrix}
1+c & 0 & 0\\
0 & 1+c & 0\\
-a & -b & 1
\end{pmatrix}
\longleftrightarrow(a,b,c). \label{Griso}%
\end{equation}
The Lie algebra isomorphisms
\begin{equation}
\left(
\begin{array}
[c]{cc}%
\frac{1}{2}c & 0\\
a+\iota b & \frac{-1}{2}c
\end{array}
\right)  \longleftrightarrow\left(
\begin{array}
[c]{ccc}%
c & 0 & 0\\
0 & c & 0\\
-a & -b & 0
\end{array}
\right)  \longleftrightarrow\left(  a,b,c\right)  \label{ktoR}%
\end{equation}
between (\ref{k-1}), (\ref{k-2}) and (\ref{k-3}) are then obtained by
differentiating (\ref{Griso}).

We now move on to the mutual actions of the groups $SU(2)$ and $K$ on each other. 
Given any $A\in SU(2)$, and any $B\in K\subset SL(2,\mathbb{C})$, the left action of $K$ on $SU(2)$ is given by
\begin{equation}\label{KonSU(2)}
B\rt A   =\left\Vert BA\left(
\begin{array}
[c]{cc}%
0 & 0\\
0 & 1
\end{array}
\right)  \right\Vert _{M}^{-1}\left(  BA\left(
\begin{array}
[c]{cc}%
0 & 0\\
0 & 1
\end{array}
\right)  +{B^{-\dagger}}A\left(
\begin{array}
[c]{cc}%
1 & 0\\
0 & 0
\end{array}
\right)  \right),
\end{equation}
where $B^{-\dagger}$ stands for the inverse of the conjugate
transpose of $B \in K$, and $\left\Vert B\right\Vert _{M}^{2}=tr(B^{\dagger}B)$ refers to
the matrix norm on $SL(2,\mathbb{C})$. The right action of $SU(2)$ on $K \subset \mathbb{R}^3$, on the other hand, is
\begin{equation}\label{SU(2)onK}
B\lt A = \frac{\left\Vert {\mathbf B} \right\Vert _{E}^{2}}{2\left(  c+1\right)  }e_{3}+A\left(  B-\frac{\left\Vert {\mathbf B} \right\Vert
_{E}^{2}}{2\left(  c+1\right)  }e_{3}\right)  A^{-1},
\end{equation}
where $\left\Vert {\bullet} \right\Vert _{E}^{2}:\mathbb{R}^{3} \to  \mathbb{R}$ denotes the Euclidean norm, in view of the identification (\ref{Griso}) of $B \in K \subset SL(2,\mathbb{C})$ with ${\mathbf B} \in K \subset \mathbb{R}^3$.

Differentiating (\ref{KonSU(2)}) with respect to $A \in SU(2)$, and
regarding $B\in K\subset GL(3,\mathbb{R})$ via (\ref{Griso}), we obtain
\begin{equation}\label{Konsu(2)}
\mathbf{\rt}:K\times\mathfrak{su}\left(  2\right)
\rightarrow\mathfrak{su}\left(  2\right)  \text{, \ \ }\left(  B,\mathbf{X}%
\right)  \mapsto B\rt\mathbf{X} := B\mathbf{X},
\end{equation}
for any $\mathbf{X}\in \mathfrak{su} \left(  2\right)  \cong\mathbb{R}^{3}$. 
Freezing the group element here, we get a linear operator $B\rt :\mathfrak{su}\left(  2\right) \to \mathfrak{su}\left(  2\right)$. The transpose of this operator $\overset{\ast}{\rt}B : \mathfrak{su}^\ast\left(  2\right) \to \mathfrak{su}^\ast\left(  2\right)$ is given by
\begin{equation} \label{ex-B*}
\Phi\overset{\ast}{\lt}B=B^T\Phi.
\end{equation}
Similarly, the derivative of \eqref{SU(2)onK} with respect to $A\in SU(2)$ renders the infinitesimal right action of the Lie algebra $\mathfrak{su}\left(  2\right)$ on $K$ as
\begin{equation}
\lt:K\times\mathfrak{su}\left(  2\right)  \rightarrow TK,\qquad
(B, {\mathbf X})\mapsto B\lt\mathbf{X} = T_{e_K}r_B\left(  \mathbf{X} \times\mathbf{\widetilde{B}}\right)  , \label{XonB}%
\end{equation}
where $\mathbf{X}\in\mathfrak{su}\left(  2\right)  \cong\mathbb{R}^{3}$, and
\[
\mathbf{\widetilde{B}} := \frac{1}{c+1}\mathbf{B}-\frac{\left\Vert
\mathbf{B}\right\Vert _{E}^{2}}{2(c+1)^{2}}\mathbf{k} 
\]
identifying once again  $B \in K \subset SL(2,\mathbb{C})$ with
${\mathbf B} \in K \subset \mathbb{R}^3$ via (\ref{Griso}). Here, $T_{e_K}r_B$
is the tangent lift of the right translation $r_B:K\rightarrow K$ by $B\in K$,
and it acts simply by the matrix multiplication regarding
$\mathbf{X} \times\mathbf{\widetilde{B}} \in \mathfrak{K} \cong \mathbb{R}^3 \cong \mathfrak{gl}(3,\mathbb{R})$
via (\ref{ktoR}). Freezing the group element in \eqref{XonB} we arrive at a linear operator $\G{a}_B:\mathfrak{su}\left(  2\right)\to T_BK$, the transpose of which is the operator $\G{a}^\ast_B:T^*_BK\to \mathfrak{su}^\ast\left(  2\right)$ given by 
\begin{equation} \label{ex-a*}
\G{a}_B^*(\Psi_B)=T^*r_B(\Psi_B)\times \mathbf{\widetilde{B}}
\end{equation}
for any $\Psi_B\in T^*_BK$. 

Next, the derivative of \eqref{KonSU(2)} with respect to $B \in K$ at the identity, in the direction of $\mathbf{Y}\in \G{K}\subset \mathbb{R}^3$, yields
\begin{align} \label{KonSU}
\rt:\G{K}\times SU(2) \to TSU(2), \qquad {\bf Y} \rt A = Tr_A\Big({\bf Y} \times \big(\Ad_A(e_3) - e_3\big)\Big),
\end{align}
where we consider $\Ad_A(e_3) - e_3 \in \B{R}^3$ to perform the vector product, then we view the resulting element in $SU(2)$, \ie as a $2\times2$ complex matrix. Freezing the group element in \eqref{KonSU}, we obtain a mapping $\G{b}_A:\G{K}\to T_ASU(2)$. The transpose of this operator $\G{b}_A^*:T^*_ASU(2)\to \G{K}^*$ is given explicitly by
\begin{equation} \label{ex-b*}
\G{b}_A^*(\Phi_A)=\big(\Ad_A(e_3) - e_3\big)\times Tr_A^* \Phi_A.
\end{equation}
Similarly, the derivative of \eqref{SU(2)onK} with respect to $B \in K$ in the direction of ${\bf Y} \in \G{K} \cong \B{R}^3$ produces 
\begin{align*}
\lt:\G{K}\times SU(2) \to \G{K}, \qquad {\bf Y} \lt A = {\rm Rot}_A({\bf Y}),
\end{align*}
and hence defines a linear mapping $\lt A:\G{K} \to \G{K}$, whose transpose $A \overset{\ast}{\rt} :\G{K}^\ast \to \G{K}^\ast$ may be given by
\begin{equation} \label{ex-A*}
A \overset{\ast}{\rt} \Psi := {\rm Rot}_A^* \Psi,
\end{equation}
for any $\Psi \in \G{K}^\ast$.

Now we are ready to write the discrete Euler-Lagrange equations on the matched pair Lie group $SL(2,\mathbb{C})=SU(2)\bowtie K$. Substituting\eqref{ex-B*}, \eqref{ex-a*}, \eqref{ex-b*}, and \eqref{ex-A*} into \eqref{MDDG-1}, we conclude that  
\begin{align*}
\begin{split}
& B_k^T(\Ad^*_{A_k}\Phi_k)+T^*r{_{B_k}}(d_2L(A_k,B_k))\times \mathbf{\widetilde{B}}-\Phi_{k+1} + \\
& \Ad^*_{B_k}\Psi_k- \big(\Ad_{A_{k+1}}(e_3) - e_3\big) \times T^*r_{A_{k+1}}(d_1L(A_{k+1},B_{k+1}))
-{\rm Rot}_{A_{k+1}}^*\Psi_{k+1}=0.
\end{split}
\end{align*} 

\section{Conclusion and Discussions}

In the present paper, we have studied the discrete dynamics on the matched pairs of Lie groupoids. This enabled us to study the equations of motion governing two mutually interacting discrete systems. More precisely, we have presented the discrete Euler-Lagrange equations on a matched pair Lie groupoid as a sum of those over the individual Lie groupoids which are matched - but enriched with the mutual actions. Taking advantage of the Lie groups being the quintessential examples of Lie groupoids, in particular we have introduced the discrete Euler-Lagrange equations on (matched pair) Lie groups. 

In order to illustrate the theory, we have studied two concrete examples; the trivial Lie groupoid, and the group $SL(2,\mathbb{C})$. 

The trivial Lie groupoid is a matched pair of the action groupoid and the coarse groupoid. In view of this observation we have realised the discrete Euler-Lagrange equations of the trivial Lie groupoid as a sum of the discrete Euler-Lagrange equations of the action groupoid, and the Euler-Lagrange equations of the coarse groupoid, but decorated with the mutual action terms.

As for the second tangible example, we considered the matched pair decomposition of the group $SL(2,\mathbb{C})$ into $SU(2)$ and a simply connected subgroup $K\subseteq SL(2,\mathbb{C})$, which, in turn, corresponds to its Iwasawa decomposition. We then similarly decomposed the discrete Euler-Lagrange equations over $SL(2,\mathbb{C})$ into those over $SU(2)$ and $K$, once again, furnished with the action terms.

We finally note that, the present paper concerns only the discrete dynamics generated by Lagrangian functions on Lie groupoids. It is very well known that there exists a theory of Lagrangian dynamics on the Lie algebroid level as well; \cite{martinez2001lagrangian,Wein96}. We plan to apply the matched pair strategy to the Lagrangian dynamics from the point of view of the Lie algebroids, which however, deserves a separate paper.

\section{Acknowledgments}
The first named author (OE) is greateful to Prof. Manuel de Le\'on
 and Prof. David Mart\'in de Diego for their encouragements and kind interests in this present study.  
Both authors acknowledge the support by T\"UB\.ITAK (the Scientific and Technological Research Council of Turkey) under the project "Matched pairs of Lagrangian and Hamiltonian Systems" with the project number 117F426.
\bibliographystyle{plain}
\bibliography{references}{}

\end{document}